\documentclass[sigconf,nonacm]{acmart}

\usepackage[boxruled, lined, linesnumbered, algosection, noend]{algorithm2e}
\usepackage{xspace}
\usepackage{subcaption}

\usepackage{balance}
\usepackage{orcidlink}

\usepackage{booktabs}

\usepackage{cleveref}

\newtheorem{theorem}{Theorem}
\newtheorem{lemma}{Lemma}
\newtheorem{corollary}{Corollary}

\SetAlgoSkip{bigskip}
\SetKwComment{Comment}{/* }{ */}
\SetKwInOut{Input}{Input}
\SetKwInOut{Output}{Output}
\SetKwRepeat{Repeat}{repeat}{until}
\SetKwBlock{RepeatOnly}{repeat}{}

\Crefname{algocf}{Algorithm}{Algorithms}
\crefname{enumi}{}{}

\newcommand{\textcite}[1]{\citeauthor{#1} \cite{#1}}

\let\cref\Cref

\newcommand{\EIS}{\texttt{EIS}\xspace}
\newcommand{\EISm}{\texttt{EISm}\xspace}
\newcommand{\NIS}{\texttt{NIS}\xspace}
\newcommand{\EISlong}{Edge Induced Subgraph\xspace}
\newcommand{\NISlong}{Node Induced Subgraph\xspace}
\newcommand{\multipassbaseline}{\texttt{3ES}\xspace}
\let\threeES\multipassbaseline
\newcommand{\CCN}{\texttt{ColoredChibaNishizeki}\xspace}
\newcommand{\Abacus}{\textsc{Abacus}\xspace}
\newcommand{\Fleet}{\textsc{Fleet3}\xspace}

\newcommand{\instancenamestyle}[1]{\texttt{#1}\xspace}
\newcommand{\movielens}{\instancenamestyle{movielens}}
\newcommand{\dblp}{\instancenamestyle{dblp}}
\newcommand{\reuters}{\instancenamestyle{reuters}}
\newcommand{\livejournalgroups}{\instancenamestyle{livejournal-groups}}
\newcommand{\trackers}{\instancenamestyle{trackers}}
\newcommand{\orkut}{\instancenamestyle{orkut}}
\newcommand{\pubmed}{\instancenamestyle{pubmed}}
\newcommand{\flickr}{\instancenamestyle{flickr}}
\newcommand{\socLiveJournal}{\instancenamestyle{soc-\allowbreak{}LiveJournal}}
\newcommand{\friendster}{\instancenamestyle{friendster}}

\newcommand{\biohuman}{\instancenamestyle{bio-human-gene}}
\newcommand{\hollywood}{\instancenamestyle{hollywood}}
\newcommand{\bnhuman}{\instancenamestyle{bn-human-Jung}}

\newcommand{\IP}{\mathbb{P}}
\newcommand{\IE}{\mathbb{E}}
\newcommand{\IV}{\operatorname{Var}}

\newcommand{\CT}{\mathcal{T}}

\renewcommand{\epsilon}{\varepsilon}
\newcommand{\eps}{\varepsilon}
\newcommand{\bigO}{\widetilde{O}}
\newcommand{\tildeO}{\widetilde{O}}

\newcommand{\poly}{\mathrm{poly}}

\newcommand{\squares}{\CT}

\newcommand{\Var}[1]{\IV\left[ #1 \right]}
\newcommand{\EXP}[1]{\IE\left[ #1 \right]}
\newcommand{\Prob}[1]{\IP\left[ #1 \right]}

\newcommand{\pluseq}{\mathrel{+}=}

\newcommand{\storedEdges}{\emph{\#stored edges}\xspace}

\begin{document}
\title{Four-Cycle Counting in Low-Degeneracy Graph Streams}
\titlenote{Authors ordered alphabetically.}

\author{Sebastian Lüderssen}
\orcid{0009-0002-5698-1250}
\affiliation{
    \institution{TU Wien}
    \city{Vienna}
    \country{Austria}
}
\email{sebastian.luederssen@tuwien.ac.at}

\author{Stefan Neumann}
\orcid{0000-0002-3981-1500}
\affiliation{
    \institution{TU Wien}
    \city{Vienna}
    \country{Austria}
}
\email{stefan.neumann@tuwien.ac.at}

\author{Pan Peng}
\orcid{0000-0003-2700-5699}
\affiliation{
  \institution{University of Science and Technology of China}
  \city{Hefei}
  \country{China}
}
\email{ppeng@ustc.edu.cn}

\renewcommand{\shortauthors}{Sebastian Lüderssen, Stefan Neumann, \& Pan Peng}

\date{}

\begin{abstract}
We study the problem of $(1+\varepsilon)$-approximating the number of four-cycles in graphs given as arbitrary order edge streams. We propose two new algorithms based on sampling induced subgraphs. Our first contribution is a two-pass algorithm that uses $\widetilde{O}(\kappa m / \sqrt{T})$ space, where $m$ is the number of edges, $T$ is the number of four-cycles, and $\kappa$ is the graph’s degeneracy. This algorithm improves upon existing theoretical bounds and is provably optimal for constant-degeneracy graphs, matching the known $\Omega(m/\sqrt{T})$ lower bound up to lower-order factors. Our second contribution is a one-pass algorithm that remains accurate when four-cycles are not highly concentrated around individual nodes, edges, or wedges; this structural property is common in sparse social and collaboration networks. We evaluate both algorithms on a variety of real-world graph streams.  The two-pass algorithm consistently outperforms state-of-the-art methods, using substantially less space to achieve a desired accuracy. The one-pass algorithm is competitive when four-cycles are evenly distributed, matching our theoretical analysis. Unlike several recent works, our algorithms perform well even on non-bipartite graphs such as social networks.
\end{abstract}

\maketitle

\section{Introduction}
Graphs are one of the most fundamental abstract data types in computer science,
as they allow modeling data from diverse fields, ranging from social networks
and transportation networks to user--item purchases. 
As data volumes continue to grow, modern applications increasingly involve large-scale graphs with billions of edges.

Subgraph counting is a well-established tool for obtaining insights into these
massive graphs and it is used in many applications, such as analyzing the human brain \cite{sporns2004motifs}, fraud detection in insurance and cryptocurrency
networks~\cite{oskarsdottirSocialNetworkAnalytics2022,lalUnderstandingMoneyTrails2021} and for user clustering in
social networks~\cite{charbey2019stars,rotabi2017detectingStrong}, among many other applications. Depending on the
concrete setting, popular subgraph patterns include triangles, four-cycles,
or small cliques.
Here, we focus on counting \emph{four-cycles}, i.e., cycles with four vertices.
In bipartite graphs, four-cycles are typically referred to as \emph{butterflies} and take the role of triangles in unipartite graphs.
They give rise to the butterfly clustering coefficient, a standard measure of cohesiveness in bipartite networks~\cite{lind2005cycles}, and are 
relevant in online recommendation systems~\cite{abuodaMA19LinkPrediction}.
Four-cycles also play an important role in unipartite networks; for instance, it was shown that in social networks, individuals who are contained in many four-cycles are also  wealthier~\cite{shridhar2025network}. 

However, the size of modern networks has led to many computational challenges,
such as the datasets not fitting into main memory. One of the paradigms to deal
with these challenges is the \emph{streaming} setting, where algorithms make a small number of sequential passes over the graph's edges while using as little memory as possible to analyze its structure. 
While one-pass streaming algorithms are ideal in some settings, multi-pass algorithms --- where the input is read multiple times in the same order --- are often practical. In external-memory systems, sequential multi-pass scans are cheaper than large RAM usage or random disk access. Many real-world workflows (e.g., ETL pipelines, Hadoop/Spark jobs) already involve multiple scans, making multi-pass algorithms a natural fit.

This has led to considerable interest in subgraph counting in the streaming setting, resulting in the development of numerous algorithms (e.g.,~\cite{lim2015mascot,stefani2016triest,bordinoMiningLargeNetworks2008,jayaramOptimalAlgorithmTriangle2021,fichtenbergerApproximatelyCountingSubgraphs2022}). Four-cycle counting has also received significant attention~\cite{sanei-mehriFLEETButterflyEstimation2019,papadiasCountingButterfliesFully2024,liApproximatelyCountingButterflies2022,kallaugherMPV19,vorotnikovaImproved3passAlgorithm2020}, with most existing work focusing on one-pass streaming algorithms, often specialized for bipartite graphs.

In this paper, we study one of the simplest and most versatile algorithms for four-cycle
counting in the streaming setting: Use one or two stream passes to collect a
small induced subgraph of the original graph, then count the number of four-cycles on the induced subgraph and scale it up to obtain a global estimate. This algorithm has the advantage that during the stream passes, it only spends constant time for each edge it processes, and we can rely on efficient
non-streaming algorithms on our small sample subsequently.
Another advantage of using this simple way of collecting an induced subgraph is that, besides allowing to count four-cycles, it maintains many topological properties of the graph very well~\cite{ahmed2013NetworkSampling}, allowing further insights into the graph at no additional memory or computational overhead.

\textbf{Our results.} We propose two novel space-efficient algorithms for counting four-cycles in arbitrary order edge streams. Both follow the aforementioned paradigm: they sample an induced subgraph during their stream pass(es) and then estimate the number of four-cycles afterwards. Our algorithms work well on bipartite graphs, as well as on unipartite graphs.

Our main result is a two-pass streaming algorithm, \EIS, that returns a
$(1+\varepsilon)$-approximation of the number of four-cycles in the graph and
uses space\footnote{We write $\tildeO(\cdot)$ to hide $\poly(\log m, 1/\eps)$ factors.}
 $\bigO(\kappa m/\sqrt{T})$, where $m$ is the number of edges in the
graph, $T$~is the number of four-cycles that shall be estimated, and $\kappa$ is
the degeneracy of the graph (see \Cref{sec:algorithms} for a formal definition). This algorithm is highly space-efficient since it
is known that real-world networks typically have very low degeneracy due to
their sparsity~\cite{eppstein2013listing,shin2018Patterns}. Moreover, the class of constant-degeneracy graphs encompasses many natural graph families, such as planar graphs, minor-closed families, and preferential attachment graphs, among others.
Our algorithm extends a $\bigO(m^{3/2}/T^{3/4})$-space algorithm for
distinguishing graphs with $0$ and $T$ four-cycles by
\textcite{mcgregorTriangleFourCycle2020} and we give a more detailed discussion after \Cref{thm:EIS}.

For sparse graphs, this improves upon the best
currently known theoretical result, which requires a space usage of
$\bigO(m/T^{1/3})$ by \citet{vorotnikovaImproved3passAlgorithm2020}, i.e., we provide an improvement as long as $\kappa = O(T^{1/6})$.  For $\kappa=\bigO(1)$ and up to lower order terms, the space usage of our algorithm is optimal, as it matches a theoretical space
lower bound of $\Omega(m/\sqrt{T})$ by \citet{mcgregorTriangleFourCycle2020}, which holds for graphs with
$\kappa=O(1)$ and for any algorithm that performs $O(1)$~passes over the stream. 
Additionally, the two passes are indeed necessary to obtain this result since
known lower bounds imply that single-pass algorithms require space
$\Omega(m)$ in general~\cite{kallaugherMPV19} (even in the stronger adjacency list streaming model).

Interestingly, the sampling procedure used by \EIS is very simple: During the first stream pass, sample a set of edges uniformly at random, and during the second stream pass, store the subgraph induced by these edges.
This sampling approach was also studied by 
\citet{ahmed2013NetworkSampling} who showed that it performs well for obtaining insights into the network topology (such as tails of degree distributions, core sizes and eigenvalues), though they did not consider subgraph counting, and their analysis is very different from ours. Thus, our work shows that this sampling approach provides a single snapshot that is useful for estimating multiple quantities, including four-cycle counts. Furthermore, since our algorithm's space usage depends on the degeneracy, the algorithm is highly efficient for sparse networks.
This is similar to the work of \citet{beraHowDegeneracyHelps2020} who also exploited the degeneracy of the network to obtain more space-efficient triangle counting algorithms (see also \cite{fichtenbergerApproximatelyCountingSubgraphs2022} for the more general case of $r$-clique counting); however, their algorithm is much more complicated than \EIS and uses different techniques.

Further, we present a single-pass streaming algorithm, \NIS, that returns a
$(1+\varepsilon)$-approximation of the number of four-cycles in the graph and
uses space $\bigO(m/\sqrt{T})$ if some technical conditions hold, which intuitively state that the four-cycles should be nicely distributed across the graph (see
\Cref{sec:theory-nis} for the details).

We implement our algorithms 
 and provide practical improvements over
their theoretical versions. In our experiments, we show that (averaged over 7~real-world datasets) \EIS requires
more than 10~times less memory than state-of-the-art competitors to obtain a mean error of less than 5\%, while still being substantially faster in terms of running time
--- even though it performs two stream passes.  We also present a version
of \EIS, which has reduced variance and which returns robust estimates using
almost 3~times less space than the competitors. We also show that \NIS obtains good results on
sparse unipartite datasets.
Our code is available online~\cite{code}.

\section{Related Work}\label{sec:relatedwork}
Subgraph counting in data streams is highly popular, both in practical communities, as well as in more theoretical ones. We review these two strands of research for four-cycle counting, and we also discuss related results for general subgraph counting.

Regarding practical streaming algorithms for four-cycle counting,
most recent works focus on bipartite graph streams.
\citet{sanei-mehriFLEETButterflyEstimation2019} presented
\Fleet, a sampling-based algorithm for estimating
four-cycle counts in bipartite insertion-only streams. It is based on edge sampling and uses only one stream pass. \Fleet relies on sampling three out of four edges of a four-cycle to detect it; this is more restrictive than our method \EIS, which only has to sample two out of four edges of a four-cycle to detect it.
\citet{liApproximatelyCountingButterflies2022} proposed another algorithm for the insertion-only setting in bipartite networks, which samples edges as well and computes estimates for the number of shared neighbors per node pair, storing only the largest values by employing an AMS-sketch~\cite{alon1999space}.
\citet{papadiasCountingButterfliesFully2024} presented the \textsc{Abacus} algorithm for fully-dynamic graph streams, which
achieves performance comparable to or better than that
of~\cite{liApproximatelyCountingButterflies2022} in the insertion-only setting.

In more general settings,
\citet{sunFABLEApproximateButterfly2024} and \citet{mengCountingButterfliesStreaming2024}
study multi-graph streams with duplicate edges, and
\citet{sheshboloukiSGrappButterflyApproximation2022} consider four-cycle counting in the sliding window streaming model.
\citet{bordinoMiningLargeNetworks2008} proposed a three-pass algorithm for counting four-cycles and other subgraphs on three or four nodes.

Most of the theoretical work about four-cycle counting in graph streams considers the following variant of the problem:
Given a graph $G$ as an arbitrary order insertion-only edge stream and a constant-factor approximation of the
number of four-cycles $T$ in $G$, output a $(1+\epsilon)$-approximation of $T$ with high probability. 

In the one-pass arbitrary order model, there are instances where $\Omega(m)$
space is required to decide whether a graph has $0$ or at least $T$ four-cycles for any $T=O(m^{1/3})$~\cite{kallaugherMPV19}, thus essentially ruling out efficient streaming algorithms for general $T$ from a theoretical standpoint (even for the stronger adjacency list streaming model).

For settings in which we can do $O(1)$~passes over the stream, 
\citet{beraTighterSpaceBounds2017} gave a simple two-pass algorithm for counting an arbitrary subgraph $H$ in space $\tildeO(m^{\beta(H)}/T)$, where $\beta(H)$ denotes the integral edge cover number of $H$.
This was improved by \citet{assadiSimpleSublinearTimeAlgorithm2018} and \citet{fichtenbergerApproximatelyCountingSubgraphs2022} to obtain a three-pass $\tildeO(m^{\rho(H)}/T)$ space algorithm for general subgraph patterns~$H$, where $\rho(H)$ is the fractional edge cover number of $H$. For four-cycles these algorithms achieve a space usage of $\tildeO(m^2/T)$.
\citet{mcgregorTriangleFourCycle2020} gave a new upper bound using $\tildeO(m/T^{1/4})$ space and three passes, which was improved to space $\tildeO(m/T^{1/3})$ by \citet{vorotnikovaImproved3passAlgorithm2020}. 
The best known constant-pass lower bound is $\Omega(m/\sqrt{T})$ by \citet{mcgregorTriangleFourCycle2020}, and our results match this bound for graphs with degeneracy $\kappa=\bigO(1)$ (and we note that the lower bound also holds for $\kappa=O(1)$). 

Recently, in \citet{luederssen2026near}, we presented a near-optimal three-pass streaming algorithm for counting four-cycles using $\tildeO(m/\sqrt{T})$ space for general graphs. While this algorithm has strong theoretical guarantees, it is too complicated to be implemented in practice. In contrast, the present paper focuses on practical algorithms that only require one or two passes and are tailored to low-degeneracy graphs; the two works were conducted independently and address different application scenarios.

\section{Algorithms}
\label{sec:algorithms}

In this section, we present our practical algorithms for counting four-cycles in
insertion-only edge streams of arbitrary order.  Unlike previous algorithms,
our algorithms are based on collecting induced subgraphs, which are either induced by
a random sample of edges or nodes, respectively.

We study undirected, unweighted graphs
$G=(V,E)$ with $n$~nodes and $m$~edges. We assume that we
obtain the edges~$E$ as an arbitrary-order stream. Given a set of
nodes~$V_S\subseteq V$ we will frequently consider the \emph{induced subgraph}
$G[V_S] = (V_S, E_S)$, which is the subgraph with node set~$V_S$ and all edges
$E_S = \{ (u,v) \in E \colon u, v \in V_S\}$ with both endpoints in $V_S$. We
write $d(v)$ to denote the degree of a vertex~$v$.
We will also talk about \emph{wedges}, which are paths of length~2.

The \emph{degeneracy $\kappa$} of a graph $G$ is defined as the maximum of the minimum degrees over all subgraphs of $G$. Up to a constant factor, it is the same as the \emph{arboricity} of the graph, which is the minimum number of forests into which its edges can be partitioned. Degeneracy is typically considered as a sparsity measure, since real-world graphs tend to have very low degeneracy (see also \Cref{tab:datasets}).

We let $T$ denote the number of four-cycles in the graph.

\subsection{Edge Induced Subgraph \EIS}
\label{sec:EIS}

We start by presenting our two-pass algorithm, \EISlong (\EIS), and we present
its pseudocode in \Cref{algo:eis}. Here, we present a version which samples
edges with a given probability~$p$, and we refer to \Cref{sec:implementation} for a
version of \EIS that stores a fixed-size number of edges.  Furthermore, in
\Cref{sec:new_m_alpha_sqrtT} we prove that with small changes, \EIS returns a
$(1+\varepsilon)$-approximate answer using space
$\bigO(m\kappa/(\varepsilon^{3}\sqrt{T}))$ and time $\tildeO(m+\kappa^2 m/\sqrt{T})$.

\begin{algorithm}[t]
    \caption{\EISlong \EIS}
    \label{algo:eis}
    \DontPrintSemicolon
    \textbf{Input:} Graph $G$, sampling probability $p$\;
    \SetKwProg{Fn}{Procedure}{:}{}
    \Fn{\EIS($G$, $p$)}{
    \textbf{Pass 1:} Sample an edge set  $S$ by sampling each edge independently with probability $p$\;
	Let $V_S$ be the nodes incident to at least one edge in $S$\;
    \textbf{Pass 2:} Collect all edges in $G[V_S]$\;
	$X = $ \CCN($G[V_S]$, $S$)\;
    \tcc{$X$ is the number of four-cycles in $G[V_S]$ with two opposite-side edges in $S$}
    \Return{$X/(2p^2)$}
	}
	\bigskip
    \Fn{\texttt{ColoredChibaNishizeki}($G[V_S]$, $S$)}{
    \label{algo:BicoloredChiba}
		$C=0$\;
		Assign color~0 to all edges in~$S$ and color~$1$ to all edges in $G[V_S]$\;
		Sort the nodes by degree s.t. $d(v_1)\geq d(v_2)\geq \dots \geq d(v_n)$\;
		\For{$v\in V_S$ in non-increasing degree order}{
			\ForEach{$u\in N_0(v)$}{
				\ForEach{$w\in N_1(u)$ with $v\neq w$}{
					$U_{01}[v,w] \pluseq 1$\;
				}
                \ForEach{$w\in N_0(u)$ with $v\neq w$}{
					$U_{00}[v,w] \pluseq 1$\;
				}
			}
			\ForEach{$u\in N_1(v)$}{
				\ForEach{$w\in N_0(u)$ with $v\neq w$}{
					$U_{10}[v,w] \pluseq 1$\;
				}
			}
			\ForEach{$w$ with $U_{01}[v,w]>0$}{
				$C \pluseq U_{01}[v,w]\cdot U_{10}[v,w]-U_{00}[v,w]$
			}
			Remove $v$ and all of its incident edges from $G[V_S]$\;
		}
        \Return{$C$}\;
	}
\end{algorithm}

\emph{High-level description.} \EIS works as follows. In the first stream pass, \EIS samples a set of
edges $S$ uniformly at random, where each edge is sampled independently with
probability~$p$.  We let $V_S = \{u\in V \colon \exists (u,v) \in S\}$
denote the set of nodes incident upon an edge in $S$.  In the second stream pass, \EIS
collects the induced subgraph~$G[V_S]$, i.e., the subgraph induced by the edges
in~$S$.

After the second pass, \EIS counts the number of four-cycles
in~$G[V_S]$ (we describe how this is done efficiently below) and estimates the global four-cycle count $T$ by scaling up the four-cycle count in $G[V_S]$.
Crucially, when counting the number of four-cycles in $G[V_S]$, \EIS only counts those
four-cycles in $G[V_S]$, which are induced by two opposite-side edges from our
edge sample $S$. Specifically, we only count a four-cycle
$v_1$--$v_2$--$v_3$--$v_4$--$v_1$ if $(v_1,v_2)\in S$ and
$(v_3,v_4)\in S$ or if $(v_2,v_3)\in S$ and $(v_4,v_1)\in S$. Let $X$ denote the number of such four-cycles; \EIS then outputs $X / (2p^2)$.

The intuition why this algorithm should be efficient is as follows (we make this
more formal in our theoretical analysis below):
(1)~Each four-cycle~$v_1$--$v_2$--$v_3$--$v_4$--$v_1$ is counted in~$G[V_S]$
with probability $2p^2$, since we sample the opposite-side edges $(v_1,v_2)$ and
$(v_3,v_4)$ with probability $p^2$, and we also have probability $p^2$ to sample
the other pair of opposite-side edges $(v_2,v_3)$ and $(v_4,v_1)$. Thus, in
expectation we sample $2 T p^2$ four-cycles and if we want to find $\bigO(1)$ of
them, it suffices to set $p=\frac{1}{\sqrt{T}}$. In contrast, the methods by
\citet{mcgregorTriangleFourCycle2020} and \citet{vorotnikovaImproved3passAlgorithm2020} all require setting $p$ to be at least $\frac{1}{T^{1/3}}$ to find the same number of
four-cycles, which is significantly larger.
(2)~If the original graph has bounded degeneracy, then the induced subgraph~$G[V_S]$
will be sparse, implying that the space to store the induced edges in $G[V_S]$
will be very small. In our theoretical analysis, we will formalize this by arguing
about the degeneracy of $G$ and $G[V_S]$.

\emph{Counting four-cycles in~$G[V_S]$ with opposite-side edges from~$S$.}
Next, we explain how to count the number of four-cycles in~$G[V_S]$ with
opposite-side edges from~$S$ efficiently. We propose an extension of the popular
algorithm by \citet{chibaArboricitySubgraphListing1985} and adapt it to
bi-edge-colored graphs; its pseudocode is stated in \Cref{algo:eis}
in the procedure~\CCN. 

We first introduce some notation. Each edge is assigned one or two colors: if an edge is in~$S$, it is assigned color~$0$; 
all edges in~$G[V_S]$ are assigned color~$1$. As edges $e\in S$ are also
contained in $G[V_S]$, such edges receive both colors $0$ and $1$. For a
vertex~$v$, $N_0(v)$ (resp.\ $N_1(v)$) denotes the set of neighbors connected to $v$ via edges in~$S$ (resp.\ $G[V_S]$). Note that $N_0(v)$ and $N_1(v)$ are not necessarily disjoint, and that edges connecting~$v$
to nodes in $N_0(v)$ (resp.\ $N_1(v)$) have color~$0$ (resp.\ $1$). Next, we maintain counters $U_{01}[v,w]$ (resp.\ $U_{10}[v,w]$), which store the number of wedges $v$--$u$--$w$ that use a $0$-edge (resp. $1$-edge) $v$--$u$ and a
$1$-edge (resp. $0$-edge) $u$--$w$. Then, a four-cycle
$v$--$u_1$--$w$--$u_2$--$v$, which is induced by opposite-side edges $(v,u_1)\in S$ and
$(w,u_2)\in S$, corresponds to a pair of alternatingly colored wedges between $v$ and $w$, which leads to $U_{10}[v,w]$ and $U_{01}[v,w]$ being increased by one.
Thus, the counters $U_{01}[v,w]$ and $U_{10}[v,w]$ allow us to efficiently identify and count only those four-cycles induced by opposite-side edges.

Given this notation, \CCN works as follows. It initializes a counter $C=0$ and orders the vertices in
non-increasing order of their degrees and iterates over them.  During an
iteration for a vertex~$v$ of highest degree, it proceeds as follows.  It
iterates over all of $v$'s neighbors $u\in N_0(v)$. For each such neighbor~$u$,
it iterates over all of $u$'s neighbors~$w\in N_1(u)$ with $w\neq v$ and increments $U_{01}[v,w]$. The counter $U_{10}[v,w]$ is created similarly. Additionally, a counter $U_{00}[v,w]$ is updated, which counts wedges using two $0$-edges.
After this has finished, we observe that the number of four-cycles with alternating colors containing $v$ and $w$ as non-adjacent vertices is given by
\begin{equation*}
    U_{01}[v,w]\cdot U_{10}[v,w]-U_{00}[v,w],
\end{equation*}
since any pair of wedges counting towards $U_{01}[v,w]$ and $U_{10}[v,w]$ defines a four-cycle induced by opposite edges in $S$, except those using the same center vertex. Thus, we need to subtract the number of wedges for which both edges are assigned both colors. This is tracked by the counter $U_{00}[v,w]$.
After updating the counter~$C$, we remove~$v$ from the graph and proceed to the next iteration until all nodes in $V_S$ have been processed. Finally, the algorithm returns~$C$.

Similar to \cite{chibaArboricitySubgraphListing1985}, one can show that \CCN
counts the four-cycles with opposite-side edges in~$S$ in time $O(m_S \kappa_S)$ and space $O(m_S)$, where $m_S=\tildeO(\kappa m/\sqrt{T})$ is the number of edges in $G[V_S]$ and $\kappa_S$ is the degeneracy of~$G[V_S]$. As
$\kappa_S \leq \kappa$, this algorithm is highly efficient and runs in
near-linear time in practice.
Thus, the overall running time of \EIS is $\tildeO(m+\kappa^2 m/\sqrt{T})$.

\subsection{Node Induced Subgraph \NIS}
\label{sec:NIS}
Next, we present our one-pass algorithm \NIS and provide its pseudocode in
\Cref{sec:appendix_algorithms}. \NIS is also based on sampling an induced subgraph, but, unlike
\EIS, it samples nodes instead of edges.

More precisely, \NIS works as follows. We let $p\in[0,1]$ be a sampling
probability and we let $h \colon V\rightarrow[0,1]$ be an $8$-wise-independent
universal hash function. Now we can obtain a random node set $V_h$ of $G$ by letting
$V_h=\{v\in V \colon h(v)< p\}$, i.e., each node is included in $V_h$ with
probability $p$. We note that this node set is not sampled explicitly, but
membership of any node~$v$ in $V_h$ can be checked in constant time by
evaluating the hash function and checking if $h(v) < p$.  Thus, during the
stream pass, we collect all edges in the induced graph $G[V_h]=(V_h,E_h)$ by
evaluating the hash function on both endpoints of an incoming edge $e=(v,w)$ and
storing $e$ if $v$ and $w$ are in $V_h$, i.e., if $h(v)<p$ and $h(w)<p$.
At the end, we count the number of four-cycles in $G[V_h]$, which we denote by $X$, and return $X/p^4$.

Note that \NIS counts a four-cycle if and only if it sampled all of its
vertices. 
As this happens with probability $p^4$, the returned value $X/p^4$ is an unbiased estimator
for $T$.

\subsection{Implementation Details}
\label{sec:implementation}
Recall that, so far, both \NIS and \EIS expect a given sampling probability $p$
as input. However, this is not very practical since we will see below that
setting $p$ requires approximate knowledge of~$m$ and~$T$.
Instead, it is more practical to
work with a hard constraint,~$k$, on the number of edges that our algorithms may
store. Thus, we adapt our algorithms to this setting and present variants of our
algorithms, which take $k$ as input and subsequently never store more than
$k$~induced edges from the graph stream. We also present \EISm, a version of \EIS with reduced variance.

\emph{\EIS with fixed-size budget.}
To adapt \EIS to never store more than $k$~edges in the induced subgraph, we
need to sample a random edge set~$S$ such that $G[V_S]$ contains at most
$k$~edges.  As the final size of our edge sample~$S$ or of
$G[V_S]$ is unknown beforehand, our algorithm initially samples ``too many'' edges in~$S$
and then removes edges from~$S$ such that $G[V_S]$ never stores
more than $k$~edges.

The size-constrained version of \EIS starts by sampling a set $S$ of $k$ edges
in the first pass using reservoir sampling \cite{vitter85random}.  In the second pass, it 
collects edges induced by $V_S$ as long as $G[V_S]$ contains at most
$k$~edges.  Once we would exceed our space budget by adding a $(k+1)$-st
edge, we prune~$S$ and $G[V_S]$ as follows: We select a
random edge $e=(v,w)\in S$ and remove it from $S$.  Now, if $v$ has no more
incident edges in~$S$, we delete $v$ from $V_S$ and remove all edges
from~$G[V_S]$ incident to $v$. We do the same for $w$. We
repeat this procedure until $G[V_S]$ stores at most $k$~edges.

We emphasize that this fixed-budget variant is an implementation-oriented
modification of the Bernoulli-sampling version (i.e., \Cref{algo:eis_oracle}) analyzed theoretically. Since
pruning is triggered by the current size of the induced subgraph $G[V_S]$, the
final sample $S$ should not be treated as an independent Bernoulli sample, nor
as a uniformly random edge subset conditional only on its final size. 
Additionally, at the end of the second pass, we have stored all edges in
$G[V_S]$ and $G[V_S]$ never has more than $k$~edges. 

Finally, note that since we did not sample with a fixed probability~$p$, we
have to redefine how we scale up our estimate of the number of four-cycles.
Concretely, after counting $X$ four-cycles in $G[V_S]$ with opposite-side edges
in~$S$, we set $\tilde{p}=|S|/m$ and we return the estimate $X/(2\tilde{p}^2)$.
Observe that we can set $\tilde{p}=|S|/m$ because after the stream pass we
know~$m$ and that $\tilde{p}$ is picked such that it matches the fact that our
final edge sample contains $|S|$~edges.

\emph{\EISm: Reducing the variance of \EIS.} 
In our preliminary experiments, we found that \EIS provides excellent results w.r.t.\ the mean (and also the median) of its results, even when using very little space (e.g., just storing 1,000 or 2,000 edges). However, when using more space its variance does not drop as quickly as for other methods (see also \Cref{sec:experiments-variance} for a further discussion).

Thus, we also consider a version of \EIS, called \EISm (which is short for \EIS-mean), which takes as input a parameter~$k$ for how many edges it might store in total. It further considers another parameter~$s\in\mathbb{N}$.
Now, instead of computing one induced subgraph, \EISm runs $s$ copies of \EIS, each of which can store up to $k/s$~edges and then we take the average over the $s$~estimates returned by the $s$~different copies of \EIS.
We run experiments with this variant for $s=32$.

\emph{\NIS with fixed-size budget.}
We can adapt \NIS to a fixed-size budget in a similar way, by dynamically
deleting vertices from the sample.  However, here we cannot use reservoir sampling, and
instead we vary the sampling probability $p$ in the algorithm.
We describe the details in \cref{sec:appendix_algorithms}.

\section{Theoretical Guarantees}
\label{sec:theory}

In this section, we state and formally prove our theoretical results.
For readability, we present all the proofs in \cref{sec:appendix_proofs}.

\subsection{Guarantees for \EIS}
\label{sec:new_m_alpha_sqrtT}

We present our guarantees for a slightly modified version of \EIS. 
\begin{theorem}
\label{thm:EIS}
	There is a two-pass streaming algorithm that, given an arbitrary order edge
	stream of a graph $G$, uses space $\bigO(\kappa m/(\epsilon^3\sqrt{T}))$ and
	returns a $(1+\epsilon)$-approximation of the number of four-cycles in~$G$
	with probability\footnote{We remark that by using the standard median trick, we can boost the constant success probability of \Cref{thm:EIS} and \Cref{thm:NIS} to $1-\delta$, with an additional cost of $O(\log(1/\delta))$ in the space complexity, for any $\delta>0$.} $\frac{2}{3}$ .
\end{theorem}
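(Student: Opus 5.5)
We analyze the Bernoulli-sampling version of \EIS with $p = c/(\epsilon^{3/2}\sqrt{T})$ for a suitable constant $c$, assuming a constant-factor estimate of $T$ (as is standard in the theoretical literature). The modification is to discard heavy structures before counting. We then bound three things in turn: the expectation of the estimator, its variance, and the size of $G[V_S]$.

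\textbf{Unbiasedness.} Let $X$ count four-cycles of $G[V_S]$ that have a pair of opposite edges in $S$. A fixed four-cycle has two disjoint pairs of opposite edges, each fully in $S$ with probability $p^2$. Count it with multiplicity (once for each such pair), as \CCN effectively does. Then $\EXP{X} = 2p^2 T$, so $X/(2p^2)$ is unbiased.

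\textbf{Variance.} We write $\Var{X}$ as a sum over pairs of (cycle, opposite-pair) indicators that share at least one edge.
\begin{itemize}
\item Identical pairs contribute $O(p^2T)$.
\item Pairs sharing exactly one edge contribute $p^3$ times the number of pairs of four-cycles sharing an edge. In terms of $t_e$, the number of four-cycles through edge $e$, this is $p^3\sum_e t_e^2$.
\end{itemize}
By Chebyshev we need $\Var{X} \le \epsilon^2 (p^2T)^2$. The first term is fine since $p^2T \ge 1/\epsilon^3$. The second term is the problem: a single edge may lie in many cycles. We handle it with a heaviness filter, in the style of \citet{mcgregorTriangleFourCycle2020}. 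Call an edge heavy if $t_e \ge \epsilon\sqrt{T}/\kappa'$ for an appropriate threshold. In a $\kappa$-degenerate graph, a wedge-counting argument in the spirit of Chiba--Nishizeki ($\sum_{(u,w)} \binom{\text{codeg}(u,w)}{2}$ with an orientation argument) shows that four-cycles containing heavy edges are at most an $\epsilon$-fraction of $T$. The modified algorithm counts only cycles whose opposite sampled pair consists of light edges. Lightness is estimated from the sample itself: in $G[V_S]$ we know every cycle through a sampled edge that uses another sampled opposite edge. This introduces only a $(1\pm\epsilon)$ bias. For light edges, $p^3\sum_e t_e^2 \le p^3 \max_e t_e \cdot 4T$, which is at most $\epsilon^2p^4T^2$ for the chosen $p$.

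\textbf{Space.} We bound $\EXP{|E(G[V_S])|}$. Orient $G$ acyclically so that every out-degree is at most $\kappa$. An edge $(u,w)$ oriented $u\to w$ lies in $G[V_S]$ only if both endpoints are in $V_S$. We charge it to the event that $u$ is touched by a sampled edge. If $u$ has degree $d(u)$, then $\Prob{u\in V_S} \le p\,d(u)$, and $u$ has at most $\kappa$ out-edges. This gives
\[
\EXP{|E(G[V_S])|} \le \sum_u \kappa \cdot \min(1, p\,d(u)) \le 2\kappa p m.
\]
This is too lossy on its own, because it only uses one endpoint. But combined with the fact that in-edges of $u$ are charged to their tails, it already gives $O(\kappa p m) = \tildeO(\kappa m/(\epsilon^{3/2}\sqrt{T}))$. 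The remaining $\epsilon$ factors (to reach $\epsilon^{-3}$) come from the heaviness test and from repetitions. Markov's inequality then bounds the space with constant probability, and we abort if the budget is exceeded.

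\textbf{Main obstacle.} The hardest step is the variance bound: handling heavy edges, and possibly heavy wedges or vertices, without a third pass. We will try first to show via degeneracy that heavy-edge cycles are negligible, and then to certify lightness using only the sampled graph. If this fails, we will restrict counting to cycles whose sampled edges pass an empirical lightness threshold, and argue that misclassification near the threshold costs only an $\epsilon$ fraction.
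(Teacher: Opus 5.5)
There is a genuine gap, and it sits in the variance step you flagged as the main obstacle.

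\textbf{The heaviness claim is false.} You assert that, in a $\kappa$-degenerate graph, four-cycles containing a heavy edge make up at most an $\epsilon$-fraction of $T$. Degeneracy does not rescue this. Take an edge $(u,v)$ plus $k$ internally disjoint paths $u$--$a_i$--$b_i$--$v$. This graph is $2$-degenerate, $T=k$, and every four-cycle contains $(u,v)$, so that edge lies in all $T$ cycles and is heavy once $k$ is large.

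A threshold below $\sqrt{T}$, such as your $\epsilon\sqrt{T}/\kappa'$, is even worse. In $K_{2,k}$ every edge lies in about $\sqrt{2T}$ cycles, so every edge would be declared heavy and everything discarded.

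What does hold is the McGregor--Vorotnikova lemma the paper uses. With threshold $\eta\sqrt{T}$, $\eta=\Theta(1/\epsilon)$, at most $\epsilon T$ four-cycles contain \emph{two or more} heavy edges. Cycles with exactly one heavy edge therefore cannot be dropped. Such a cycle has only one light opposite pair, so it is detected with probability $p^2$ rather than $2p^2$. In the example above, your estimator $X/(2p^2)$ over light pairs has expectation $T/2$. The paper handles this by splitting the count into $A_0$ (no heavy edge) and $A_1$ (exactly one heavy edge) and outputting $A_0/(2p^2)+A_1/p^2$.

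\textbf{Consequences for $p$, the oracle, and space.} With the correct threshold, your covariance term becomes $p^3\cdot O(\eta\sqrt{T})\cdot T$. Requiring this to be at most $\epsilon^2p^4T^2$ forces $p=\Theta(1/(\epsilon^3\sqrt{T}))$. This is where the $\epsilon^{-3}$ in the space bound comes from; $p=c/(\epsilon^{3/2}\sqrt{T})$ does not suffice, and no repetitions are involved.

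Deciding lightness from the same sample $S$ correlates the classification with the indicators you are summing. So neither your claimed $(1\pm\epsilon)$ bias nor the independence used in your variance computation is justified. The paper avoids this as follows:
\begin{itemize}
\item It samples a second, independent edge set $R$ in pass~1 and collects $G[V_S\cup V_R]$ in pass~2.
\item It calls $e$ heavy iff the number $q(e)$ of cycles induced by $e$ and an edge of $R$ exceeds $2\eta p\sqrt{T}$.
\item Since $p$ carries a $\log n$ factor, Chernoff plus a union bound give, with high probability, that oracle-heavy edges have $t(e)\ge\eta\sqrt{T}$, so the two-heavy-edges lemma applies.
\item Likewise, oracle-light edges have $t(e)\le 4\eta\sqrt{T}$, so the variance bound applies.
\item Conditioning on $R$ leaves $S$ an independent Bernoulli sample.
\end{itemize}
The factor-$4$ slack between the two guarantees is what makes near-threshold misclassification harmless. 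This is the concrete version of the fallback you sketch at the end.

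Your space argument, $|E(G[V_S])|\le\kappa|V_S|\le 2\kappa|S|$ via an acyclic orientation, is correct and matches the paper. Degeneracy enters only there, not in the heaviness lemma.
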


To prove this, our main challenge will be to bound the
variance of our estimator. We do this by restricting the set of four-cycles
in the sample that count towards our estimate; in particular, we have to
avoid counting four-cycles containing \emph{heavy} edges, i.e., edges that are
included in many four-cycles, as such edges affect the variance substantially.
This is done by introducing a heaviness oracle, which returns for any queried
edge whether it is heavy, and which allows us to exclude any heavy edges
from our sample.

Our algorithm extends a four-cycle detection algorithm from \cite{mcgregorTriangleFourCycle2020}.  
Our analysis adds three parts: (1)~We adapt the approach in \cite{mcgregorTriangleFourCycle2020} to obtain a
$(1+\epsilon)$-approximate four-cycle \emph{counting} algorithm, given a
heaviness oracle. (2)~We present a heaviness oracle, which can be built using two stream
passes using only a constant factor of extra space and which answers queries correctly with high probability.
(3)~We analyze the space complexity in terms of the degeneracy. 

\emph{Algorithm.}
The theoretical version of \EIS samples two edge sets $S$ and $R$ in the
first pass and lets $V_S$ and $V_R$ denote the nodes incident to these sets. Here, the sampling probability is set to $p=320\log(n)/(\epsilon^3\sqrt{T})$, where we assume that we know a lower bound on~$T$ (this is a standard assumption in this area, see, e.g., the discussions in \cite[Sec.~1]{braverman2013hard} and \cite[Sec.~1.2]{mcgregorBetterAlgorithmsCounting2016}). Then the algorithm collects the subgraph $G[V_S\cup V_R]$ in the second pass. Edges in $R$ are used to build an oracle to classify edges in $G[V_S]$ as heavy (corresponding to high variance) or light. 
We then use the independent sample~$S$ to estimate~$T$ by counting four-cycles induced by pairs of edges in $S$ that contain at most one heavy edge.
Concretely, we let $A_0$ ($A_1$) denote the number of such four-cycles without (with one) oracle-heavy edge. Now the final estimate is given by $A_0/(2p^2) + A_1/p^2$.
We state the pseudocode in \Cref{sec:pseudocode-EIS-theory}.

We say two edges $e$ and $f$ \emph{induce a four-cycle} if they are vertex-disjoint and there is a four-cycle containing both $e$ and $f$. Note that two edges can induce up to two four-cycles.

\emph{Space analysis.}
Both $R$ and $S$ are of size $\Theta(mp)=\widetilde{\Theta}(m/(\epsilon^3\sqrt{T}))$ with high probability.
Thus ${G[V_S\cup V_R]}$ contains $\bigO(\kappa m/(\epsilon^3\sqrt{T}))$ edges
since any $n'$-node subgraph of a graph with degeneracy~$\kappa$ contains
$O(\kappa n')$~edges (see, e.g., \cite{chibaArboricitySubgraphListing1985}).
Counting the four-cycles in the graph sample can be done without any additional space usage.

\emph{Heaviness.}
For $e\in E$, let $t(e)$ denote the number of four-cycles containing edge $e$, and
set $\eta=\frac{82}{\epsilon}$. We define an edge to be \textit{heavy} if $t(e)\geq \eta\sqrt{T}$ and \textit{light} otherwise. We use the following structural lemma from \textcite{mcgregorTriangleFourCycle2020}.

\begin{lemma}[\cite{mcgregorTriangleFourCycle2020}]
\label{lemma:heavyedges}
    At most $\epsilon T$ four-cycles contain two or more heavy edges.
\end{lemma}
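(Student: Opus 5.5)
The plan is a double count of pairs (four-cycle, heavy edge on it), combined with an upper bound on the number of heavy edges. Let $H$ be the set of heavy edges, i.e.\ those with $t(e)\geq \eta\sqrt{T}$. Since every four-cycle has exactly four edges, $\sum_{e\in E} t(e)=4T$. Each heavy edge contributes at least $\eta\sqrt{T}$ to this sum, so
\[
|H|\leq \frac{4T}{\eta\sqrt{T}}=\frac{4\sqrt{T}}{\eta}.
\]

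Next I would count four-cycles containing two or more heavy edges by charging each one to a pair of its heavy edges. Such a cycle contains a pair of heavy edges that are either adjacent or opposite, so there are two cases.

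\emph{Opposite pair.} Two vertex-disjoint edges $e=(a,b)$ and $f=(c,d)$ induce at most two four-cycles, namely $a$--$b$--$c$--$d$ and $a$--$b$--$d$--$c$. The number of cycles of this type is therefore at most $2\binom{|H|}{2}\leq |H|^2\leq 16T/\eta^2$.

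\emph{Adjacent pair.} Two heavy edges $e=(a,b)$ and $f=(b,c)$ sharing $b$ form a wedge, and that wedge lies in as many four-cycles as $a$ and $c$ have common neighbours other than $b$. A direct bound via $|H|^2$ is not enough here, so I would instead bound the contribution by $\sum_{e\in H} t(e)$, which is hopeless, or argue as follows. A cycle $a$--$b$--$c$--$x$--$a$ with $(a,b),(b,c)\in H$ is counted from the heavy edge $(a,b)$ in at most $t((a,b))$ ways. Alternatively, I would use the trick from \cite{mcgregorTriangleFourCycle2020}: if a cycle has two adjacent heavy edges $e,f$, look at the opposite edge pair $e,f'$, where $f'$ is the edge opposite to $e$.

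I expect this adjacent case to be the main obstacle, since adjacent heavy pairs can share many cycles. The route I would try first is to note that the number of four-cycles through a fixed wedge $a$--$b$--$c$ is at most $\min(t(e),t(f))$. Summing over adjacent heavy pairs and using $|H|\leq 4\sqrt{T}/\eta$ gives $O(T/\eta)$ only if each heavy edge has few heavy neighbours, which need not hold.

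I would therefore fall back on the following argument. For fixed heavy $e$, the cycles through $e$ that also contain some other heavy edge are at most $t(e)$, but they are also at most $\sum_{f\in H} c(e,f)$, where $c(e,f)\leq$ the number of cycles containing both $e$ and $f$. Combining this with Cauchy--Schwarz on $\sum_v d_H(v)^2$ and the degeneracy-free fact that the cycles through $e$ are at most $T$ should give a total of $O(T/\eta)$. With $\eta=82/\epsilon$ and the absolute constants tracked, this is at most $\epsilon T$.

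If this fails, I would simply cite the lemma as proven in \cite{mcgregorTriangleFourCycle2020}, as the paper does.
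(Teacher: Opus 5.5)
Your bound $|H|\le 4\sqrt{T}/\eta$, obtained from $\sum_e t(e)=4T$, is correct, and so is your treatment of cycles containing an opposite pair of heavy edges (at most $|H|^2\le 16T/\eta^2$). The paper does not prove this lemma; it imports it from \cite{mcgregorTriangleFourCycle2020}. Your fallback of citing it therefore matches the paper, but it is not an argument.

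The genuine gap is the adjacent case, which is where all the difficulty lies. None of the routes you list closes it:
\begin{itemize}
\item The per-wedge bound $\min(t(e),t(f))$, and the trivial fact that at most $T$ cycles pass through $e$, give no $1/\eta$ factor.
\item A Cauchy--Schwarz estimate on $\sum_v d_H(v)^2$ controls only the \emph{number} of heavy wedges. That number can be $\Theta(|H|^2)$, as in a heavy star.
\item Multiplying the wedge count by the best uniform per-wedge bound $d_{ac}-1\le\sqrt{2T}$ gives only $O(T^{3/2}/\eta^2)$.
\end{itemize}
So your claim of a total of $O(T/\eta)$ is unsupported as written.

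The missing ingredient is a second-moment bound on \emph{codegrees}, not on heavy degrees. Let $d_{ac}=|N(a)\cap N(c)|$. Every four-cycle has exactly two diagonals, so $\sum_{\{a,c\}}\tbinom{d_{ac}}{2}=2T$. Hence $\sum(d_{ac}-1)^2\le\sum d_{ac}(d_{ac}-1)=4T$, summing over pairs with $d_{ac}\ge 1$.

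Now fix the shared vertex $b$, let $H(b)$ be its heavy neighbours, and let $k_b=|H(b)|$. The cycles containing two heavy edges meeting at $b$ number at most $\sum_{\{a,c\}\subseteq H(b)}(d_{ac}-1)$. Applying Cauchy--Schwarz over these $\tbinom{k_b}{2}$ pairs gives
\[ \sum_{\{a,c\}\subseteq H(b)}(d_{ac}-1)\le \sqrt{\tbinom{k_b}{2}}\cdot\sqrt{4T}\le \sqrt{2}\,k_b\sqrt{T}, \]
which is \emph{linear} in $k_b$.

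Summing over $b$ and using $\sum_b k_b=2|H|\le 8\sqrt{T}/\eta$ bounds the adjacent case by $8\sqrt{2}\,T/\eta$. Together with your $16T/\eta^2$ for the opposite case, the total is well below $\epsilon T$ for $\eta=82/\epsilon$.
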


\emph{Oracle.}
Let $q(e)$ be the number of four-cycles induced by $e\in G[V_S]$ and an edge in $R$.
Observe that we can compute $q(e)$ in our algorithm since we collected all edges
in $G[V_S \cup V_R]$ and thus, since $e\in G[V_S]$ and the opposite-side edge in $R$ are vertex-disjoint, any four-cycle counted in $q(e)$ only uses edges from $G[V_S \cup V_R]$.

Now, for edges~$e\in G[V_S]$, the oracle is defined as 
\begin{equation}
\label{eq:oracle}
    \operatorname{oracle}(e)=\begin{cases}
        \textsf{light}, & \text{if } q(e) \leq 2\eta p\sqrt{T},\\
        \textsf{heavy}, & \text{if } q(e) >2\eta p\sqrt{T}.
        \end{cases}
\end{equation}
Note that the algorithm only calls the oracle for edges in $G[V_S]$.

We call edges that are classified as light (resp.\ heavy) by the oracle as
\emph{oracle-light} (resp.\ \emph{oracle-heavy}). 
\Cref{lemma:oracle_accuracy} shows that we do not need to distinguish between light and oracle-light (resp. heavy and oracle-heavy) edges.
\begin{lemma}
\label{lemma:oracle_accuracy}
    With probability $1-\frac{1}{n}$, for all $e\in G[V_S]$ it holds that,
    $\operatorname{oracle}(e)=\textsf{light}$ implies $t(e)\leq 4\eta \sqrt{T}$ and 
    $\operatorname{oracle}(e)=\textsf{heavy}$ implies $t(e)\geq \eta \sqrt{T}$.
\end{lemma}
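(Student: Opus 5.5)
Fix an edge $e=(u,v)$ of $G$ and consider $q(e)$, the number of four-cycles induced by $e$ and an edge of $R$. The plan is to compute its mean, prove it concentrates, and then take a union bound over edges.

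\emph{Mean.} Every four-cycle through $e$ has exactly one edge opposite to $e$. So $q(e)=\sum_{f} c_e(f)\,\mathbf{1}[f\in R]$, where $f$ ranges over edges vertex-disjoint from $e$ and $c_e(f)\in\{0,1,2\}$ counts the four-cycles containing both $e$ and $f$. Since $\sum_f c_e(f)=t(e)$, we get $\IE[q(e)]=p\,t(e)$.

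\emph{Concentration.} The sample $R$ is independent of $S$, so we may condition on $S$ (and thus on the event $e\in G[V_S]$) without changing the distribution of $q(e)$. We argue about every edge $e\in E$ and then restrict to those in $G[V_S]$. The indicators $\mathbf{1}[f\in R]$ are independent, and each summand lies in $[0,2]$, so a multiplicative Chernoff bound applies to $q(e)/2$. We split into two cases.

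\textbf{Case $t(e)>4\eta\sqrt T$.} Here $\IE[q(e)]>4\eta p\sqrt T$. Falling to $2\eta p\sqrt T$ is a drop to at most half the mean, so
\[
\Prob{q(e)\le 2\eta p\sqrt T}\le \exp\bigl(-\IE[q(e)]/16\bigr)\le \exp(-\eta p\sqrt T/4).
\]
With $p=320\log(n)/(\epsilon^3\sqrt T)$ and $\eta=82/\epsilon$ we have $\eta p\sqrt T\ge 26000\log n$. Hence this probability is at most $n^{-3}$. So oracle-light implies $t(e)\le 4\eta\sqrt T$.

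\textbf{Case $t(e)<\eta\sqrt T$.} Here $\IE[q(e)]<\eta p\sqrt T$, and we must bound $\Prob{q(e)>2\eta p\sqrt T}$. The mean may be small, so we use the Chernoff form that compares against a fixed upper bound $\mu^+=\eta p\sqrt T$ rather than the true mean. Deviating to $2\mu^+$ then has probability at most $\exp(-\mu^+/6)$, adjusted by constants for the factor $2$ scaling. This is again at most $n^{-3}$. So oracle-heavy implies $t(e)\ge\eta\sqrt T$.

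\emph{Union bound.} There are at most $m\le n^2$ edges, so summing the failure probabilities gives total failure at most $n^{-1}$, which is the claimed $1-\frac1n$.

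The main obstacle is the second case when $t(e)$ is tiny: a relative-deviation bound around the true mean is useless there. Using the upper-bound version of Chernoff with $\mu^+=\eta p\sqrt T$, or equivalently padding $q(e)$ with independent dummy Bernoulli variables so its mean becomes exactly $\mu^+$, handles this. A second point to verify is that the weights $c_e(f)\le 2$ only cost constant factors, and that the constants in $p$ and $\eta$ leave enough slack for $\exp(-\Omega(\eta p\sqrt T))\le n^{-3}$.
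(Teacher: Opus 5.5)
Your proof is correct and follows the paper's own argument. The paper also uses $\IE[q(e)]=p\,t(e)$, applies the multiplicative Chernoff bound to $q(e)/2$ in the two cases $t(e)>4\eta\sqrt{T}$ and $t(e)<\eta\sqrt{T}$ (each failing with probability at most $n^{-3}$), and finishes with a union bound over edges. Your explicit use of the Chernoff form with an upper bound $\mu^+$ on the mean, and your choice to define the event for every $e\in E$ using $R$ alone, only make precise what the paper leaves implicit.
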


\emph{Counting accuracy.} 
Let $\mathcal{T}_i$ denote the set of four-cycles 
containing exactly $i$ oracle-heavy edges and let $T_i=|\mathcal{T}_i|$ for $i\in \{0,1\}$. 
By \Cref{lemma:oracle_accuracy}, oracle-heavy edges are heavy with high probability, and thus by \Cref{lemma:heavyedges},
\begin{equation}
    (1-\epsilon)T\leq T_0+T_1\leq T. \label{eq:c4_T0T1}
\end{equation}

Now let $\hat{T_0}=A_0/2p^2$ and  $\hat{T_1}=A_1/p^2$  denote the algorithm's estimates for $T_0$ and $T_1$. 
To analyze $A_0$ and $A_1$, we consider vertex-disjoint pairs of light edges that the algorithm might pick.
Let $D_i$ denote the set of oracle-light edge pairs inducing at least one four-cycle of $\mathcal{T}_i$. 
For $i\in\{0,1\}$ we define for each edge-pair $q\in D_i$ a random variable $X_q$ which counts the four-cycles induced by $q$ with exactly $i$ oracle-heavy edges, if $q\subseteq S$, and is 0, otherwise. Note that $X_q\leq2$.

We now have 
\begin{align*}
    \EXP{\hat{T_0}}&= \EXP{\frac{A_0}{2p^2}} = \frac{1}{2p^2}\sum_{q\in D_0} \IE[X_q]=\frac{1}{2p^2}\sum_{C\in \mathcal{T}_0} 2p^2=T_0
\end{align*} 
since every four-cycle in $\mathcal{T}_0$ is induced by two pairs of oracle-light edges. Similarly, $\EXP{\hat{T_1}}= T_1$, 
since every four-cycle in $\mathcal{T}_1$ is induced by only one pair of oracle-light edges. 

Now we show that $\hat{T_i}$ is accurate with probability~$\frac{9}{10}$.
\begin{lemma}
\label{lemma:eis_variance}
With probability $\frac{9}{10}$, $|\hat{T_i}-T_i|\leq \epsilon T$ for $i\in\{0,1\}$.
\end{lemma}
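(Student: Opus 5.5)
We prove the claim with Chebyshev's inequality, bounding $\Var{\hat{T_i}}\le \epsilon^2T^2/20$ for each $i\in\{0,1\}$. A union bound over the two indices then gives failure probability at most $2\cdot\frac{1}{20}=\frac{1}{10}$. Throughout we condition on the event of \Cref{lemma:oracle_accuracy}; this is legitimate because the oracle is built from the sample $R$, which is independent of $S$. On that event every oracle-light edge $e$ satisfies $t(e)\le 4\eta\sqrt{T}$, and the sets $D_i$ are fixed before $S$ is drawn.

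Write $\hat{T_i}=c_i\sum_{q\in D_i}X_q$ with $c_0=1/(2p^2)$ and $c_1=1/p^2$. Then
\[
\Var{\hat{T_i}}\le c_i^2\sum_{q,q'\in D_i}\operatorname{Cov}(X_q,X_{q'}).
\]
Two pairs whose edge sets are disjoint give independent variables, so their covariance vanishes. The remaining terms fall into three cases.

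\emph{Case $q=q'$.} Since $X_q\le 2$, we have $\IE[X_q^2]\le 2\,\IE[X_q]$. Summing over $q$ contributes $O(c_i^2p^2T_i)=O(T/p^2)$.

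\emph{Case $|q\cap q'|=1$.} The pairs share one edge $e$, so $q\cup q'$ consists of three edges and $\IE[X_qX_{q'}]\le 4p^3$. We count such configurations: fix the shared oracle-light edge $e$. The number of pairs $q\ni e$ in $D_i$ is at most $t(e)\le 4\eta\sqrt{T}$, because each such pair induces a four-cycle through $e$. Hence for a fixed $q$ the number of partners $q'$ is at most $2\cdot 4\eta\sqrt{T}$, one factor for each of the two edges of $q$. Summing over the at most $O(T)$ pairs $q$ gives $O(\eta T^{3/2})$ configurations. This case therefore contributes $O\bigl(c_i^2p^3\eta T^{3/2}\bigr)=O(\eta T^{3/2}/p)$.

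This middle case is the crux of the argument, and it is exactly where restricting to light edges is essential. Without that restriction, a single edge lying in $\Theta(T)$ cycles would create $T^2$ correlated pairs and the variance bound would fail.

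Combining the cases,
\[
\Var{\hat{T_i}}=O\!\left(\frac{T}{p^2}+\frac{\eta T^{3/2}}{p}\right).
\]
Now substitute $p=320\log(n)/(\epsilon^3\sqrt{T})$ and $\eta=82/\epsilon$. The first term becomes $T^2\epsilon^6/(320\log n)^2$. The second becomes at most $82\,T^2\epsilon^2/(320\log n)$. Both are $\le \epsilon^2T^2/40$ for $n$ larger than a constant, so their sum is at most $\epsilon^2T^2/20$, as required. Chebyshev's inequality then gives
\[
\Prob{|\hat{T_i}-T_i|>\epsilon T}\le \frac{\Var{\hat{T_i}}}{\epsilon^2T^2}\le \frac{1}{20}.
\]
The unbiasedness $\EXP{\hat{T_i}}=T_i$ used here was established just before the statement. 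The union bound over $i\in\{0,1\}$ completes the proof.
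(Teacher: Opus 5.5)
Your proposal is correct and follows essentially the same route as the paper's proof. Both use Chebyshev's inequality, split the variance into diagonal terms and pairs sharing exactly one edge (with $\IE[X_qX_{q'}]\le 4p^3$), bound the number of partners by $2\cdot 4\eta\sqrt{T}$ via the oracle-light guarantee of \Cref{lemma:oracle_accuracy}, and use $|D_i|\le 2T$ before substituting $p$. Your explicit covariance formulation (disjoint pairs contribute zero), conditioning on $R$, and union bound over $i\in\{0,1\}$ are tidier than the paper's write-up, which labels its overlapping-pairs sum as $\IE[A_i^2]$ and gets $9/10$ for each $i$ separately. For the conditioning to be $S$-independent, read the oracle event as ranging over all of $E$ rather than $G[V_S]$; the same union bound gives this.
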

\allowdisplaybreaks
We can now prove \Cref{thm:EIS}.
\begin{proof}[Proof of \Cref{thm:EIS}]
    
\Cref{lemma:eis_variance} together with \eqref{eq:c4_T0T1} implies
    $(1-3\epsilon)T\leq \hat{T_0}+\hat{T_1}\leq (1+2\epsilon)T$.
By setting $\epsilon'=\epsilon/6$ and running the algorithm for $\epsilon'$ we have $1/(1+\epsilon)\leq 1-3\epsilon'$ and thus get a $(1+\epsilon)$ approximation on the number of four-cycles $T$.
\end{proof}

\emph{Guarantees for \EIS without oracle.}
As mentioned in \Cref{sec:implementation}, we do not use any heaviness oracle in our experiments, as it does not improve the performance of the algorithm.
If we neglect the oracle, i.e., by setting $\operatorname{oracle}(e)=\textsf{light}$ for all edges $e$, we obtain a slightly weaker result, which --- together with our experimental results --- explains the high quality of \EIS solutions in practice.
Let $\Delta_E = \max_{e\in E} t(e)$ be the maximum edge heaviness, i.e., the maximum number of four-cycles sharing a single edge. 
If $\Delta_E\leq\sqrt{T}$, there are no heavy edges in $G$ and the algorithm correctly assumes that all edges are light. This immediately yields the following result, which also improves upon the dependency on $\frac{1}{\varepsilon}$ because we do not use the oracle here. Furthermore, as we report in \Cref{tab:datasets}, the assumption of $\Delta_E \leq \sqrt{T}$ is reasonable on real-world datasets.
\begin{corollary}
\label{cor:EIS}
    \Cref{algo:eis} (without a heaviness oracle) is a two-pass streaming algorithm with space complexity $\bigO(\kappa m /(\epsilon^2\sqrt{T}))$ and returns a $(1+\epsilon)$-approximation of the number of four-cycles in~$G$ with probability $\frac{2}{3}$ if $\Delta_E\leq\sqrt{T}$.
\end{corollary}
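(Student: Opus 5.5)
We reduce \Cref{cor:EIS} to the analysis of the oracle-based algorithm behind \Cref{thm:EIS}, with the oracle replaced by the constant function $\operatorname{oracle}(e)=\textsf{light}$ and without the sample $R$. Since $\Delta_E\leq\sqrt{T}$, every edge satisfies $t(e)\leq\sqrt{T}$. So all edges are genuinely light, with a margin better than the $4\eta\sqrt{T}$ bound that \Cref{lemma:oracle_accuracy} guarantees. In this setting $\mathcal{T}_1=\emptyset$ and $T_0=T$, and the estimator $X/(2p^2)$ of \Cref{algo:eis} is exactly $\hat{T_0}=A_0/(2p^2)$. The computation preceding \Cref{lemma:eis_variance} then gives $\EXP{\hat{T_0}}=T$ directly, and no $\epsilon T$ loss from \Cref{lemma:heavyedges} is incurred. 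It remains to show that $\Var{\hat{T_0}}\leq \epsilon^2T^2/3$ for $p=c/(\epsilon\sqrt{T})$ with a suitable constant $c$; Chebyshev's inequality then yields success probability $2/3$.

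For the variance, write $A_0=\sum_{q\in D_0}X_q$ over vertex-disjoint edge pairs $q$ that induce a four-cycle, with $X_q\leq 2$. Then $\Var{A_0}\leq\sum_{q}\EXP{X_q^2}+\sum_{q\neq q'}\mathrm{Cov}(X_q,X_{q'})$. Covariances vanish unless $q$ and $q'$ share an edge, because edges are sampled independently. The diagonal term is at most $4\cdot 2T\cdot p^2=O(Tp^2)$, since there are at most $2T$ inducing pairs. For pairs $q=\{e,f\}$ and $q'=\{e,f'\}$ sharing the edge $e$, we have $\EXP{X_qX_{q'}}\leq 4p^3$. The number of such ordered pairs is at most $\sum_e (\#\text{pairs containing } e)^2\leq\sum_e t(e)^2\leq \Delta_E\sum_e t(e)=4T\Delta_E\leq 4T^{3/2}$. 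Hence
\[
\Var{\hat{T_0}}\leq\frac{O(Tp^2)+O(T^{3/2}p^3)}{4p^4}=O\!\left(\frac{T}{p^2}+\frac{T^{3/2}}{p}\right).
\]
With $p=c/(\epsilon\sqrt{T})$, the first term is $O(\epsilon^2T^2/c^2)$ and the second is $O(\epsilon T^2/c)$.

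The second term is the main obstacle: it only forces $p\gtrsim 1/(\epsilon^2\sqrt{T})$. To achieve the claimed $1/\epsilon^2$ overall, we would therefore take $p=c/(\epsilon^2\sqrt{T})$. Both variance terms are then $O(\epsilon^2T^2/c)$, and a suitable constant $c$ completes the Chebyshev bound. If a sharper bound is needed, we would try to exploit that $\Delta_E\leq\sqrt{T}$ holds with constant $1$ rather than with $\eta=82/\epsilon$. This is precisely what saves the factor $1/\epsilon$ relative to \Cref{thm:EIS}, where heavy edges can have $t(e)$ up to $4\eta\sqrt{T}$.

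Finally, the space bound follows as in the space analysis for \Cref{thm:EIS}. With high probability $|S|=O(mp+\log n)=\bigO(m/(\epsilon^2\sqrt{T}))$, so $|V_S|\leq 2|S|$. The induced subgraph $G[V_S]$ has degeneracy at most $\kappa$ and therefore $O(\kappa|V_S|)=\bigO(\kappa m/(\epsilon^2\sqrt{T}))$ edges. \CCN then counts $X$ using $O(m_S)$ additional space. A union bound over the high-probability size event and the Chebyshev event gives overall success probability at least $2/3$ after adjusting constants.
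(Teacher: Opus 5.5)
Your proposal is correct and follows the paper's route. The paper obtains \Cref{cor:EIS} by rerunning the analysis of \Cref{thm:EIS} with the constant oracle, so that $T_0=T$, and by replacing the per-edge bound $4\eta\sqrt{T}$ in the proof of \Cref{lemma:eis_variance} with $\Delta_E\le\sqrt{T}$; this makes $p=\Theta(1/(\epsilon^2\sqrt{T}))$ suffice and saves the factor $1/\epsilon$, which matches your corrected choice $p=c/(\epsilon^2\sqrt{T})$, while your count of overlapping pairs via $\sum_e t(e)^2\le 4T\Delta_E$ is only a different bookkeeping of the same cross-term bound.
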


\subsection{Guarantees for \NIS}
\label{sec:theory-nis}
We now provide theoretical guarantees for \NIS (see \Cref{algo:nis}). In the theorem, we let 
$\Delta_V$ denote the maximum number of four-cycles sharing a single vertex,
$\Delta_E$ denote the maximum number of four-cycles sharing a single edge, and
$\Delta_W$ denote the maximum number of four-cycles sharing a single wedge.

\begin{theorem}
\label{thm:NIS}
\NIS is a 1-pass streaming algorithm using space $O(m/(\epsilon^4\sqrt{T}))$, and it returns a $(1+\epsilon)$-approximation on the number of four-cycles with probability $\frac{5}{8}$, if the graph fulfills $\Delta_V\leq T^{3/4}$, $\Delta_E\leq \sqrt{T}$ and $\Delta_W\leq T^{1/4}$.
\end{theorem}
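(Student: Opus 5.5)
We analyze the estimator $X/p^4$ with $p = c/(\epsilon\, T^{1/4})$ for a suitable constant $c$; the $\epsilon^{-4}$ in the space bound comes from $p^2 = c^2/(\epsilon^2\sqrt{T})$. Unbiasedness was already noted in \Cref{sec:NIS}. It only uses that the four distinct vertices of a cycle are in $V_h$ simultaneously with probability $p^4$, which holds because $h$ is $8$-wise (hence $4$-wise) independent.

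\emph{Space.} Each edge is stored iff both endpoints are sampled, which happens with probability $p^2$ by pairwise independence. So $\EXP{|E_h|} = mp^2 = O(m/(\epsilon^2\sqrt{T}))$. Markov's inequality then gives $|E_h| \le 8mp^2$ with probability $7/8$. We allow this $1/8$ failure in addition to the Chebyshev failure below, which is consistent with the claimed success probability of $5/8$. The space bound $O(m/(\epsilon^4\sqrt T))$ leaves room for a more conservative choice of $p$ if needed.

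\emph{Variance.} Write $X=\sum_{C}Y_C$, where $Y_C$ is the indicator that all vertices of cycle $C$ are sampled. Then $\Var{X}=\sum_{C,C'}\bigl(\IE[Y_CY_{C'}]-p^8\bigr)$. Since $h$ is $8$-wise independent and $|C\cup C'|\le 8$, the joint probability is exactly $p^{|V(C)\cup V(C')|}$. Pairs of vertex-disjoint cycles therefore contribute zero. We split the remaining pairs by the shared vertex set $k=|V(C)\cap V(C')|\in\{1,2,3,4\}$; each such pair contributes at most $p^{8-k}$.
\begin{itemize}
\item $k=1$: there are at most $T\cdot 4\Delta_V$ pairs, giving a contribution of at most $4T\Delta_V p^7$.
\item $k=2$: the two shared vertices are either adjacent in $C$ (share an edge position) or opposite (the pair shares a wedge-like structure, i.e.\ two common "diagonal" vertices). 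We bound the adjacent case by $T\cdot O(\Delta_E)$. For the opposite case, $C'$ contains the two vertices as opposite corners; the number of cycles with given opposite corners $u,w$ is $\binom{c(u,w)}{2}$ where $c(u,w)$ counts common neighbors. I would bound this count by the number of cycles through a wedge $u$--$x$--$w$ of $C$, which is at most $\Delta_W$. This gives $T\cdot O(\Delta_E+\Delta_W)\,p^6$.
\item $k=3$: the pair shares a wedge of $C$, so the contribution is at most $T\cdot O(\Delta_W)\,p^5$.
\item $k=4$: the contribution is at most $O(T)\,p^4$.
\end{itemize}
Dividing by $p^8$, we get
\[
\Var{X/p^4}\le O\bigl(T\Delta_V/p+T(\Delta_E+\Delta_W)/p^2+T\Delta_W/p^3+T/p^4\bigr).
\]
Plugging in the hypotheses and $p=c/(\epsilon T^{1/4})$, each term becomes $O(\epsilon T^2/c)$, $O(\epsilon^2T^2/c^2)$, $O(\epsilon^3T^2/c^3)$, and $O(\epsilon^4T^2/c^4)$ respectively. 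The first term is only $\epsilon$, not $\epsilon^2$, times $T^2$, so it needs care: I would instead take $p = c/(\epsilon^2 T^{1/4})$, matching the stated $\epsilon^{-4}$ space factor. Then every term is $O(\epsilon^2T^2/c)$, and Chebyshev's inequality gives $\Prob{|X/p^4-T|>\epsilon T}\le 1/4$ for a large constant $c$. A union bound with the space event yields success probability at least $5/8$.

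\emph{Main obstacle.} The main difficulty is the $k=2$ opposite-corner case: two cycles can share two non-adjacent vertices without sharing any edge or wedge. My first attempt would be to charge such a pair to a wedge $u$--$x$--$w$ of $C$. The cycles $C'$ through $u,w$ as opposite corners are formed from pairs of wedges $u$--$y$--$w$, $u$--$z$--$w$. Counting the cycles through a fixed wedge together with the total number of $u$--$w$ wedges might require $\Delta_W^2$ rather than $\Delta_W$. If so, I would fall back on the observation that the cycles through opposite corners $u,w$ number $\binom{c}{2}\le c\cdot\Delta_W$, and that $c\le\Delta_W+1$. I would then bound the resulting sum over $C$ using $\sum_C c(u,w)\le O(T\cdot\Delta_W)$ and check that this still fits within the $\epsilon^2T^2$ budget under the given assumptions.
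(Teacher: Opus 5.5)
Your proposal follows the paper's proof essentially step for step. You use the same sampling rate $p=\Theta(1/(\epsilon^2T^{1/4}))$ (the paper takes $p=32/(\epsilon^2T^{1/4})$) and Markov's inequality for space with failure probability $\frac18$. You bound the variance by grouping intersecting pairs of cycles by the number $k$ of shared vertices, apply Chebyshev with $\Var{\hat T}\le\epsilon^2T^2/4$, and finish with the union bound $1-\frac18-\frac14=\frac58$. Your covariance formulation is also the right way to justify discarding vertex-disjoint pairs.

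The one step you must discard is your first claim in the $k=2$ opposite-corner case. It is false that only $O(\Delta_W)$ cycles share two opposite corners $u,w$ with a given $C$. In $K_{2,r}$, with $u,w$ the two degree-$r$ vertices, $\Delta_W=r-1$, yet each cycle shares exactly $\{u,w\}$ with $\binom{r-2}{2}=\Theta(\Delta_W^2)$ other cycles. So the $1/p^2$ term of your displayed variance bound must read $T(\Delta_E+\Delta_W^2)/p^2$.

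Your fallback is exactly what the paper uses. The wedge $u$--$x$--$w$ lies in $c(u,w)-1\le\Delta_W$ cycles, hence $\binom{c(u,w)}{2}\le\binom{\Delta_W+1}{2}$. This gives the paper's term $4T\Delta_W^2/p^2$, which under $\Delta_W\le T^{1/4}$ and your final $p$ is $O(\epsilon^4T^2)$ and fits the budget.

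There is a clean way to write the $k=2$ and $k=3$ cases. For each of the six vertex pairs of $C$, bound the cycles $C'$ containing that pair as an edge (at most $\Delta_E$) or as opposite corners (at most $\binom{\Delta_W+1}{2}$). For each of the four vertex triples of $C$, bound the cycles through any of the at most three wedges on that triple. This also handles pairs or triples arranged differently in $C'$ than in $C$ because of chords, a point your "shares a wedge of $C$" phrasing glosses over.
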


We validate the theorem's assumptions on real-world datasets in \Cref{tab:datasets} and show that they hold on unipartite social-network datasets, whereas bipartite graphs typically have larger values for $\Delta_V$ and~$\Delta_W$.

\section{Experiments}
\label{sec:experiments}

We empirically evaluate our algorithms \EIS,  \EISm and \NIS on real-world datasets and compare them to previous works.
Our implementation is available online~\cite{code}.
Our goal is to answer the following questions:
\begin{enumerate}
    \item How much space is necessary for a good average error?
    \item How large is the variance of the algorithms' estimates?
    \item How computationally efficient are the algorithms?
\end{enumerate}

\emph{Datasets.} We evaluate our algorithms on 13 real-world networks of varying size, taken from the KONECT network library \cite{kunegisKONECTKoblenzNetwork2013} and the Network Repository \cite{nr}. 
For all datasets, we consider an undirected version and remove loops and duplicate edges. We use the provided edge list for each instance as the order for the edge stream.

\Cref{tab:datasets} presents the datasets and core properties, including their average degree $\bar{d}$, degeneracy $\kappa$ and the heaviness thresholds used in the theoretical analysis of our algorithms (stated as a ratio for better readability; see also \Cref{cor:EIS} and \Cref{thm:NIS}). The lower the given values of $\Delta_V/T^{3/4}$, $\Delta_E/T^{2/4}$, and $\Delta_W/T^{1/4}$, the lower the variance of our algorithms \NIS and \EIS. While all three of these ratios need to be small in order for \NIS to be accurate, for \EIS the variance depends solely on $\Delta_E$ as shown in \Cref{cor:EIS}. The datasets are chosen from various domains and reflect a wide range of structural properties.

\begin{table*}[t]
  \centering
  \small
  \caption{Statistics of our datasets. For each graph, we report its number of nodes~$n$ and edges~$m$. We also report the average degree~$\bar{d}$, the degeneracy~$\kappa$ and the exact number of four-cycles~$T$. Further, we report the maximum number of four-cycles per node~$\Delta_V$, per edge~$\Delta_E$, and wedge~$\Delta_W$, normalized by the heaviness thresholds from \Cref{thm:NIS}. Finally, we report $m/\sqrt{T}$, which serves as a lower bound on how much space is theoretically necessary for $O(1)$-pass algorithms, and the type of the graph.}
  \label{tab:datasets}
  \resizebox{\linewidth}{!}{%
  \begin{tabular}{lrrrrrrrrrrl}
    \toprule
    Instance            & $n$      & $m$        & $\bar{d}$ & $\kappa$ & $T$      & $\Delta_V/T^{3/4}$ & $\Delta_E/T^{1/2}$ & $\Delta_W/T^{1/4}$ & $m/\sqrt{T}$ & Type         \\
    \midrule
    \movielens          & 0.08M    & 10.00M    & 248.40 & 476 & 1.20T    &  16.54 &   6.29 &   24.20 &    9.14 & bipartite     \\
    \dblp                & 7.58M    & 12.28M    &   3.24 &  14 & 31.67M   &   0.25 &   0.13 &    5.68 & 2182.33 & bipartite     \\
    \reuters             & 1.07M    & 60.57M    & 113.74 & 192 & 7.49T    & 102.14 &   6.21 &   94.36 &   22.13 & bipartite     \\
    \livejournalgroups  & 10.69M   &112.31M    &  21.01 & 108 & 3.30T    & 283.44 &   5.52 &  338.99 &   61.85 & bipartite     \\
    \trackers            & 40.42M   &140.61M    &   6.96 & 437 &20.07T    &1083.61 &  10.44 & 1163.24 &   31.39 & bipartite     \\
    \orkut               & 11.51M   &327.04M    &  56.81 & 466 &22.13T    &  49.86 &   4.38 &   60.36 &   69.52 & bipartite     \\
    \pubmed              &  8.34M   &483.45M    & 115.85 & 108 &40.82T    & 244.24 &   2.85 &  251.55 &   75.65 & bipartite     \\
    \midrule
    \biohuman           &0.02M  & 12.32M        & 563 & 2047 & 16.13T   &  3.81  & 8.59  & 3.62 & 3.06  & unipartite \\
    \flickr             & 2.30M    & 22.84M    &  19.83 & 600 & 0.71T    & 4.56 &   5.86 &  7.75 &   27.06 & unipartite    \\
    \socLiveJournal     & 4.85M    & 42.85M    &  17.69 & 372 & 51.52B   & 1.47 &   2.24 &  9.62 &  188.79 & unipartite    \\
    \hollywood          &1.1M   & 56.37M        & 512 & 2208 & 4.96T    & 2.07  & 6.45   & 4.51 & 22.28    & unipartite \\
    \bnhuman            &0.78M  & 267.84M       & 683 & 1200 & 44.01T   & 2.90  & 8.9    & 7.96 & 40.37 & unipartite\\
    \friendster         & 68.34M   &  1.81B    & 53.02 & 304 & 470.2B    & 0.22 &   0.39 &  3.62 & 2642.37 & unipartite    \\
    \bottomrule
  \end{tabular}
  }
\end{table*}
We observe that across all datasets, 
$\Delta_E$ is small. 
This justifies that for \EIS the assumptions of \Cref{cor:EIS} are satisfied and we do not have to use the heaviness oracle from the theoretical analysis in practice.
Further, the unipartite social networks feature small values of $\Delta_V$ and $\Delta_W$, making them well-suited for both \NIS and \EIS. In contrast, the bipartite networks typically have large values of $\Delta_V$ and $\Delta_W$; here, 
\trackers is a particularly extreme case where half of all four-cycles share a single node. \dblp is an outlier due to its extreme sparsity ($\kappa=14$) and few four-cycles ($T$ is very small). 
We provide a thorough description of the datasets and their properties in \cref{sec:datasets}.

\emph{Algorithms.}
We implemented our algorithms \EIS, \EISm and \NIS using C\texttt{++} and make the source code publicly available \cite{code}.
We use the fixed-sample-size variants of our algorithms (see \Cref{sec:implementation}).
We note that we also implemented the theoretical version of \EIS from \Cref{sec:new_m_alpha_sqrtT}, but it did not yield better results in practice (which is also explained by the low $\Delta_E$-values reported in \Cref{tab:datasets} and \Cref{cor:EIS}), and therefore we omit it here.

We compare against the state-of-the-art algorithms \textsc{Fleet} by \citet{sanei-mehriFLEETButterflyEstimation2019} and \textsc{Abacus} by \citet{papadiasCountingButterfliesFully2024}.
Out of several algorithms in the \textsc{Fleet} suite, we run their best performing algorithm \Fleet using their C\texttt{++} implementation. We set the subsampling parameter of \Fleet to $\gamma=0.75$, as recommended by the authors. 
For \textsc{Abacus}, we run their publicly available Java implementation.
Both of these algorithms are one-pass algorithms. They sample edges from the input graph~$G$ during the stream pass and work with a given limit~$k$ on the size of that sample, i.e., the number of edges stored from $G$ never exceeds $k$.
We do not compare against the algorithm by \citet{liApproximatelyCountingButterflies2022} since in the experiments of \cite{papadiasCountingButterfliesFully2024} \textsc{Abacus} performs comparable or better than that of \cite{liApproximatelyCountingButterflies2022} and since their code is not available.

As \EIS uses two passes over the stream, while \NIS, \textsc{Fleet} and \textsc{Abacus} use only one stream pass, we additionally compare to a simple 2-pass estimator mentioned in the state-of-the-art theoretical work~\cite{vorotnikovaImproved3passAlgorithm2020}, which we denote by \multipassbaseline: In the first pass, we sample $k$ edges using reservoir sampling to obtain a uniformly random edge sample. 
In the second pass we do not store any further edges, but only count the number of four-cycles each edge completes with three edges from the sample. 
The total number of observed four-cycles is divided by $4p^3$ to obtain an unbiased estimator where $p=k/m$.
We note that \multipassbaseline is based on the same idea as the one-pass \Fleet algorithm but does not rely on subsampling. Interestingly, we will see that, unlike \Fleet, \multipassbaseline also performs well on unipartite graphs, indicating that the second stream pass is indeed helpful.

\emph{Experimental setup.}
To measure the space usage of the algorithms, we report how many edges they may store; this setup was also used by the previous works~\cite{sanei-mehriFLEETButterflyEstimation2019,papadiasCountingButterfliesFully2024,liApproximatelyCountingButterflies2022}. We will report this as the (number of) \storedEdges $k$.

We run all algorithms on 8 different sample sizes, ranging from~500 to~256k \storedEdges, increasing by factors of 2.
This setting is more ambitious than that of previous works, which used sample sizes ranging from 75k to 600k \storedEdges \cite{papadiasCountingButterfliesFully2024,sanei-mehriFLEETButterflyEstimation2019}. \EISm is run for $s=32$, i.e., it computes 32~smaller samples and outputs the average; we provide a sensitivity analysis of the parameter $s$ in \Cref{sec:appendix_experiments_sensitivity-s}.

All algorithms are run 100~times per dataset and sample size. We report averages over these 100~runs and further discuss their
variance below. 
For an algorithm's estimate~$\hat{T}$, 
we report the absolute relative error~$\frac{|\hat{T}-T|}{T}$, as well as
the signed relative error $\frac{\hat{T}-T}{T}$.
Note that a signed relative error of~$-1$
indicates that the algorithm did not find a single four-cycle and thus estimated
$\hat{T}=0$.

\subsection{Average Error Analysis}

\begin{table*}[t!]
\centering
\caption{Minimum \storedEdges required for each algorithm to achieve 10\% and 5\% relative error, respectively,  averaged over 100 runs. Averaged over all bipartite instances, \EIS requires 5.9~times less space to achieve 10\% and 11.2~times less space to achieve 5\%~error. A dashed line indicates the error bound is not met even at a sample size of 256k. }
\label{tab:error_table}
\resizebox{\linewidth}{!}{%
\begin{tabular}{lrrrrrrrrrrrr}
\toprule
Instance & \multicolumn{6}{c}{$<10\%$ Error} & \multicolumn{6}{c}{$<5\%$ Error} \\
 \cmidrule(lr){2-7} \cmidrule(lr){8-13}
 & \Fleet & \Abacus & \threeES & \NIS & \EIS & \EISm &  \Fleet & \Abacus & \threeES & \NIS & \EIS & \EISm  \\
\midrule
\movielens        & 2.0k       & 1.0k       & \textbf{0.5k}       & 8.0k       & \textbf{0.5k}       & 2.0k       & 2.0k       & 1.0k       & 1.0k       & 32.0k      & \textbf{0.5k}       & 4.0k       \\
\dblp                 & 32.0k      & 16.0k      & 64.0k      & 8.0k       & \textbf{2.0k}       & 16.0k      & 32.0k      & 16.0k      & 64.0k      & 8.0k       & \textbf{2.0k}       & 16.0k      \\
\reuters              & 1.0k       & 2.0k       & 8.0k       & 64.0k      & \textbf{0.5k}       & \textbf{0.5k}       & 8.0k       & 8.0k       & 16.0k      & 128.0k     & \textbf{0.5k}       & 2.0k       \\
\livejournalgroups   & 8.0k       & 8.0k       & 1.0k       & 256.0k     & \textbf{0.5k}       & \textbf{0.5k}       & 16.0k      & 8.0k       & 4.0k       & 256.0k     & 1.0k       & \textbf{0.5k}       \\
\trackers             & 4.0k       & 32.0k      & 8.0k       & –          & \textbf{0.5k}       & 1.0k       & 4.0k       & 32.0k      & 8.0k       & –          & \textbf{0.5k}       & 2.0k       \\
\orkut                & 2.0k       & 16.0k      & 8.0k       & 16.0k      & \textbf{0.5k}       & \textbf{0.5k}       & 4.0k       & 32.0k      & 32.0k      & 64.0k      & \textbf{0.5k}       & \textbf{0.5k}       \\
\pubmed               & 16.0k      & 32.0k      & 8.0k       & –          & \textbf{0.5k}       & 2.0k       & 16.0k      & 32.0k      & 16.0k      & –          & \textbf{0.5k}       & 4.0k       \\
\midrule
\biohuman           & –          & –          & \textbf{0.5k}       & 2.0k       & \textbf{0.5k}       & 4.0k       & –          & –          & \textbf{0.5k}       & 8.0k       & 1.0k       & 32.0k      \\
\flickr               & –          & –          & \textbf{0.5k}       & \textbf{0.5k}       & \textbf{0.5k}       & \textbf{0.5k}       & –          & –          & \textbf{0.5k}       & 2.0k       & \textbf{0.5k}       & 4.0k       \\
\socLiveJournal      & –          & –          & 8.0k       & 8.0k       & \textbf{1.0k}       & \textbf{1.0k}       & –          & –          & 8.0k       & 8.0k       & \textbf{1.0k}       & 4.0k       \\
\hollywood          & –          & –          & 2.0k       & 1.0k       & \textbf{0.5k}       & \textbf{0.5k}       & –          & –          & 2.0k       & 4.0k       & \textbf{0.5k}       & \textbf{0.5k}       \\
\bnhuman            & –          & –          & 2.0k       & \textbf{0.5k}       & \textbf{0.5k}       & \textbf{0.5k}       & –          & –          & 2.0k       & 1.0k       & \textbf{0.5k}       & \textbf{0.5k}       \\
\friendster           & –          & –          & 128.0k     & \textbf{2.0k}       & 4.0k       & 16.0k          & –          & –          & 256.0k     & \textbf{2.0k}       & 16.0k      & 32.0k           \\
\bottomrule
\end{tabular}
}
\end{table*}

First, we study the minimum \storedEdges for which the algorithms achieve a mean relative error at most 10\% and 5\%, respectively.
We present the results in \Cref{tab:error_table}.

Our two-pass algorithm \EIS consistently achieves an accurate mean estimate using very small graph samples.
For example, only considering 500 out of 480 million edges on \pubmed yields estimates with less than 5\% error on average.
In fact, \EIS approximates the four-cycle count of all instances up to 5\% error using just 1k edges, except for the two graphs with the least four-cycle density, \dblp and \friendster, which both have $m/\sqrt{T}$-ratios of over 2,000. This matches the space lower bound of $\Omega(m/\sqrt{T})$ (see \Cref{tab:datasets}), indicating that estimating $T$ on these datasets requires more space.
While our variance-reduced variant \EISm requires slightly larger \storedEdges than \EIS for accurate mean estimates, both of our two-pass algorithms outperform the one-pass previous works \Fleet and \Abacus substantially in terms of mean accuracy on most instances. Further, they are also clearly more efficient than the two-pass baseline \multipassbaseline, showing that the increased accuracy is not due to the additional stream pass.
The only exception is the small and dense \movielens dataset, which also has the largest four-cycle density among all our instances, where \multipassbaseline, \Fleet and \Abacus almost match the performance of \EIS.
As expected, \NIS performs worse on the bipartite datasets, since they have large $\Delta_V$ and $\Delta_W$ values.

As  \Fleet and \Abacus are only designed for bipartite graph streams, they do not correctly count four-cycles on the unipartite networks \flickr, \socLiveJournal and \friendster. 
Our one-pass \NIS algorithm generates accurate estimates on average for the unipartite instances, beating the edge-sampling-based approaches on the \friendster network.

We note that the performance of \NIS matches our theoretical analysis very well: 
It performs well on the unipartite networks, as well as on the bipartite \dblp dataset, where $\Delta_V$, $\Delta_E$ and $\Delta_W$ are small.
Indeed, \NIS has 5\%~error when just storing 8k~edges on \dblp. However, it fails to produce good estimates on \trackers and \pubmed, even when storing 256k~edges, due to the large $\Delta_V$- and $\Delta_W$-values of the datasets.
These results highlight that the performance of \NIS as described in \Cref{thm:NIS} aligns very well with the dataset properties reported in \Cref{tab:datasets}: On \trackers, \pubmed and \livejournalgroups the normalized maximum node and wedge heaviness values, $\Delta_V$ and $\Delta_W$, which indicate how well four-cycles are distributed over the graph, are orders of magnitude higher than on \dblp and the unipartite instances, and this directly transfers to higher variance and worse estimates of the \NIS estimator.

\begin{table*}[t]
\centering
\caption{Minimum \storedEdges required to achieve 10\% error on 50\% and 90\% of the individual runs for each algorithm. Averaged over all bipartite instances, \EISm beats baselines by a factor of 6.1 for a 50\% success rate and 2.8 for a 90\% success rate.}
\label{tab:variance_table}
\resizebox{\linewidth}{!}{%
\begin{tabular}{lrrrrrrrrrrrr}
\toprule
Instance & \multicolumn{6}{c}{$<10\%$ Error on $>50\%$ of runs} & \multicolumn{6}{c}{$<10\%$ Error on $>90\%$ of runs} \\
 \cmidrule(lr){2-7} \cmidrule(lr){8-13}
 & \Fleet & \Abacus & \threeES & \NIS & \EIS & \EISm &  \Fleet & \Abacus & \threeES & \NIS & \EIS & \EISm  \\
\midrule
\movielens        & \textbf{4.0k}       & \textbf{4.0k}       & \textbf{4.0k}       & 64.0k      & 8.0k       & \textbf{4.0k}       & \textbf{8.0k}       & \textbf{8.0k}       & 16.0k      & –          & 256.0k     & 32.0k      \\
\dblp                 & 128.0k     & 128.0k     & 128.0k     & 64.0k      & \textbf{32.0k}      & 64.0k      & 256.0k     & 256.0k     & 256.0k     & 256.0k     & \textbf{64.0k}      & 256.0k     \\
\reuters              & 32.0k      & 16.0k      & 32.0k      & –          & 8.0k       & \textbf{2.0k}       & 64.0k      & \textbf{32.0k}      & 64.0k      & –          & 128.0k     & \textbf{32.0k}      \\
\livejournalgroups   & 64.0k      & 64.0k      & 128.0k     & –          & 32.0k      & \textbf{16.0k}      & 256.0k     & 256.0k     & 256.0k     & –          & –          & \textbf{64.0k}      \\
\trackers             & 128.0k     & 128.0k     & 256.0k     & –          & 16.0k      & \textbf{8.0k}       & –          & 256.0k     & –          & –          & 128.0k     & \textbf{64.0k}      \\
\orkut                & 64.0k      & 64.0k      & 64.0k      & –          & \textbf{16.0k}      & \textbf{16.0k}      & \textbf{128.0k}     & \textbf{128.0k}     & \textbf{128.0k}     & –          & –          & \textbf{128.0k}     \\
\pubmed               & 128.0k     & 64.0k      & 128.0k     & –          & \textbf{8.0k}       & \textbf{8.0k}       & 256.0k     & 256.0k     & 256.0k     & –          & 128.0k     & \textbf{32.0k}      \\
\midrule
\biohuman           & –          & –          & \textbf{2.0k}       & 32.0k      & 16.0k      & 4.0k       & –          & –          & \textbf{8.0k}       & –          & 256.0k     & 32.0k      \\
\flickr               & –          & –          & 8.0k       & 256.0k     & 64.0k      & \textbf{4.0k}       & –          & –          & \textbf{32.0k}      & –          & –          & 64.0k      \\
\socLiveJournal      & –          & –          & 64.0k      & 256.0k     & 32.0k      & \textbf{16.0k}      & –          & –          & 128.0k     & –          & –          & \textbf{64.0k}      \\
\hollywood          & –          & –          & 16.0k      & –          & 128.0k     & \textbf{4.0k}       & –          & –          & \textbf{32.0k}      & –          & –          & 64.0k      \\
\bnhuman            & –          & –          & 32.0k      & 16.0k      & \textbf{2.0k}       & \textbf{2.0k}       & –          & –          & 64.0k      & –          & 32.0k      & \textbf{16.0k}      \\
\friendster           & –          & –          & –          & \textbf{64.0k}      & \textbf{64.0k}      & 128.0k          & –          & –          & –          & \textbf{256.0k}     & \textbf{256.0k}     & –          \\
\bottomrule
\end{tabular}
}
\end{table*}

\subsection{Variance Analysis}
\label{sec:experiments-variance}
Second, we study the variance of the algorithms.
In \Cref{tab:variance_table} we present the minimum \storedEdges to achieve at most 10\% error in at least 50\% and at least 90\% of the individual runs.

In both the 50\% and 90\% individual accuracy regimes, our variance-reduced variant \EISm performs best across most instances. Exceptions are \movielens and \orkut, where \Fleet and \Abacus perform similarly, as well as the high degeneracy dataset \biohuman.
We observe that \EIS still beats \Fleet and \Abacus in the setting where 50\% of the estimates need to be accurate. However, the variance of \EIS remains too high to produce high-quality estimates on 90\% of runs on many instances (we give a justification for this in \cref{sec:appendix_experiments_inducedFraction}).

\NIS is able to obtain accurate predictions in at least 50\% of the runs on the unipartite datasets, but not in 90\% of the runs when storing at most 256k~edges.

A final important observation is that on the sparse \dblp network, the regular \EIS clearly outperforms the variance-reduced \EISm. We provide an explanation for this phenomenon in \cref{sec:appendix_experiments_inducedFraction}.

\subsection{Running Time Analysis}
Third, in \Cref{fig:runtime} we compare the running times of all algorithms by considering the time it took to output an estimate using 32k \storedEdges on all instances, and for different sample sizes on the \reuters network.
To avoid bias from how efficiently the algorithms read the file from disk, we loaded the graphs into main memory before running the algorithms.

\begin{figure}[t]
\newcommand{\smallfigwidth}{0.49\linewidth} 
\newcommand{\smallfigsep}{1em}               

    \makebox[\linewidth][c]{%
    \hspace{2em}
    \begin{subfigure}[b]{\linewidth}
      \centering
      \includegraphics[width=0.7\linewidth]{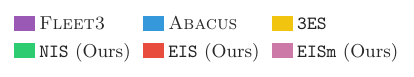}
    \end{subfigure}
  }

  \makebox[\smallfigwidth][c]{%
    \begin{subfigure}[b]{\smallfigwidth}
      \captionsetup{skip=-1pt}
      \centering
      \includegraphics[width=\linewidth]{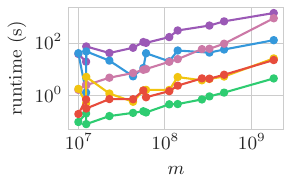}
      \caption{\small Runtime for 32k \storedEdges \phantom{aaaaaaaaaaaaaaaa}}
      \label{fig:runtime_m}
    \end{subfigure}
  }
  \hfill
  \makebox[\smallfigwidth][c]{%
    \begin{subfigure}[b]{\smallfigwidth}
      \captionsetup{skip=-1pt}
      \centering
      \includegraphics[width=\linewidth]{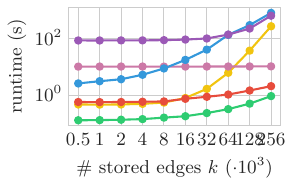}
      \caption{\small Runtime for varying \storedEdges on \reuters}
      \label{fig:runtime_k}
    \end{subfigure}
  }

  \caption{Comparison of the algorithms' running times on all datasets (\Cref{fig:runtime_m}) and for varying \storedEdges $k$ on \reuters (\Cref{fig:runtime_k}). \NIS is consistently the fastest method, followed by \EIS, which is still much faster than the baselines. \EISm's running time is between \Abacus and \Fleet, but scales better as the \storedEdges $k$ grows.}
  \label{fig:runtime}
\end{figure}

In \Cref{fig:runtime_m}, we find that the running times of our algorithms \NIS, \EIS and \EISm scale linearly with the number of edges in the instance. This is expected as these algorithms only require $O(1)$~time per edge in the stream and a single call to a four-cycle counting algorithm on the sample of a fixed size.
The \EIS algorithm takes twice as long as \NIS for one estimate, as it performs two passes over the stream instead of one.
The \EISm algorithm is approximately 32~times slower than \EIS, since we set $s=32$ and thus we need to process each incoming stream edge for each of the 32~subgraphs we are sampling.
In \cref{sec:appendix_experiments_sensitivity-s} we give a thorough sensitivity analysis for the parameter $s$ and find that choosing $s=32$ provides a reasonable trade-off between runtime and variance improvements.

The algorithms \multipassbaseline, \Fleet and \Abacus also scale roughly linearly in running time for fixed \storedEdges, but exhibit large differences in running times across instances with a similar number of edges.
In fact, these algorithms require up to $O(d^2)$ time upon arrival of an edge in the stream, where $d$ denotes the maximum node degree in the sample. 
Thus, for a fixed \storedEdges, the density of the network determines the running time, explaining higher execution times for the 10 million edge \movielens network in comparison to the next larger 12 million edge \dblp network.

When varying the \storedEdges in \Cref{fig:runtime_k}, the difference in runtime scaling between previous works and our algorithms is even more noticeable. 
On \reuters, \multipassbaseline, \Fleet and \Abacus show quadratic increases in runtime, whereas our algorithms do not consume any extra time per edge, and only have a small runtime overhead for computing the four-cycle count in the graph sample after the stream passes.

\NIS and \EIS compute estimates significantly faster than previous works, even when using two stream passes instead of one (in the case of \EIS).
Our \EISm variant requires more runtime than \EIS, scaling linearly in the number $s$ of samples to compute for the mean.
Although \EISm takes longer than \Abacus for small sample sizes, its superior scaling with respect to the \storedEdges and its higher accuracy make it highly practical, particularly when considering even larger networks with tens of billions of edges, where more space may be necessary.

\subsection{Summary and Further Experiments}
Summarizing our experiments, we arrive at the following recommendations for picking a suitable algorithm.
If two stream passes are feasible, we recommend \EIS for fast and reliable estimates in expectation, if moderate variance is acceptable. If very low variance is required, \EISm is preferable.
If only a single stream pass is feasible, we recommend \NIS for very sparse unipartite datasets. Otherwise, \Abacus is preferred for its higher accuracy, despite its longer runtime.

Further, we found that, across the evaluated datasets, the observed algorithm performance was closely tied to the node and wedge heaviness: when $\Delta_V$ and $\Delta_W$ are small (e.g., the unipartite social networks \flickr, \socLiveJournal, \friendster), \NIS produces accurate mean estimates, including strong performance on \friendster; likewise, on the bipartite \dblp instance (where $\Delta_V$ and $\Delta_W$ are also small), \NIS already reaches about 5\% error with only $8\text{k}$ stored edges. In contrast, on instances with very large $\Delta_V$ and $\Delta_W$ (most notably \trackers but also \pubmed and \livejournalgroups, whose normalized $\Delta_V$ and $\Delta_W$ are orders of magnitude larger than on \dblp and the unipartite graphs), \NIS exhibits substantially higher variance and fails to obtain good estimates even at $256\text{k}$ stored edges. Over the same range of graphs, \EIS and \EISm remain reliable despite these large $\Delta_V,\Delta_W$ values: \EIS achieves less than $5\%$ mean error on \pubmed already with about 500~stored edges, while \EISm further reduces variance and delivers the best per-run accuracy on most datasets (with \dblp as the main exception, where subdividing the space budget degrades performance).

In \cref{sec:appendix_experiments}, we present additional experiments and plots. 
In particular, we present a detailed runtime breakdown and plots showing how the algorithms' relative error and variance behave as a function of \storedEdges $k$.

\section{Conclusion}
We studied four-cycle counting in arbitrary order graph streams. Our main contribution is a two-pass streaming algorithm, \EIS, that uses space~$\bigO(\kappa m/\sqrt{T})$, matching a theoretical space lower bound of $\Omega(m/\sqrt{T})$ for $\kappa=\bigO(1)$.
This algorithm performs very well in practice, using substantially less space to obtain small average errors than state-of-the-art baselines. We also presented a version with reduced variance, \EISm, which exhibits smaller variance than the algorithms we compare to. Further, we presented a one-pass algorithm, \NIS, which works well on unipartite social networks.

All of our algorithms are based on sampling a small induced subgraph and estimating the number of four-cycles subsequently. We believe that an intriguing question for further study is to develop more algorithms that follow this paradigm. As was shown in the past by \citet{ahmed2013NetworkSampling}, such induced subgraphs provide a lot of information about the overall graph topology. Thus, it is interesting to study whether such samples can also be used to estimate triangle counts, or clique counts, or other important graph properties.
If this is successful, by sampling a single induced subgraph we can solve multiple problems with little computational and memory overhead.

\section*{Acknowledgement}
    This research has been funded by the Vienna Science and Technology Fund (WWTF) [Grant ID: 10.47379/VRG23013]. PP is supported in part by NSFC Grant 62272431 and Quantum Science and Technology - National Science and Technology Major Project (Grant No. 2021ZD0302901).

\bibliographystyle{plainnat}
\bibliography{refs}

@inproceedings{mcgregorTriangleFourCycle2020,
  author       = {Andrew McGregor and Sofya Vorotnikova},
  title        = {Triangle and Four Cycle Counting in the Data Stream Model},
  booktitle    = {{PODS}},
  pages        = {445--456},
  publisher    = {{ACM}},
  year         = {2020}
}

@inproceedings{mcgregorBetterAlgorithmsCounting2016,
  author       = {Andrew McGregor and
                  Sofya Vorotnikova and
                  Hoa T. Vu},
  title        = {Better Algorithms for Counting Triangles in Data Streams},
  booktitle    = {{PODS}},
  pages        = {401--411},
  publisher    = {{ACM}},
  year         = {2016}
}

@inproceedings{papadiasCountingButterfliesFully2024,
  author       = {Serafeim Papadias and
                  Zoi Kaoudi and
                  Varun Pandey and
                  Jorge{-}Arnulfo Quian{\'{e}}{-}Ruiz and
                  Volker Markl},
  title        = {Counting Butterflies in Fully Dynamic Bipartite Graph Streams},
  booktitle    = {{ICDE}},
  pages        = {2917--2930},
  publisher    = {{IEEE}},
  year         = {2024}
}

@inproceedings{sunFABLEApproximateButterfly2024,
  author       = {Guozhang Sun and
                  Yuhai Zhao and
                  Yuan Li},
  title        = {{FABLE:} Approximate Butterfly Counting in Bipartite Graph Stream
                  with Duplicate Edges},
  booktitle    = {{CIKM}},
  pages        = {2158--2167},
  publisher    = {{ACM}},
  year         = {2024}
}

@article{mengCountingButterfliesStreaming2024,
  author       = {Lingkai Meng and
                  Long Yuan and
                  Xuemin Lin and
                  Chengjie Li and
                  Kai Wang and
                  Wenjie Zhang},
  title        = {Counting Butterflies over Streaming Bipartite Graphs with Duplicate
                  Edges},
  journal      = {CoRR},
  volume       = {abs/2412.11488},
  year         = {2024}
}

@article{liApproximatelyCountingButterflies2022,
  author       = {Rundong Li and
                  Pinghui Wang and
                  Peng Jia and
                  Xiangliang Zhang and
                  Junzhou Zhao and
                  Jing Tao and
                  Ye Yuan and
                  Xiaohong Guan},
  title        = {Approximately Counting Butterflies in Large Bipartite Graph Streams},
  journal      = {{IEEE} Trans. Knowl. Data Eng.},
  volume       = {34},
  number       = {12},
  pages        = {5621--5635},
  year         = {2022}
}

@inproceedings{sanei-mehriFLEETButterflyEstimation2019,
  author       = {Seyed{-}Vahid Sanei{-}Mehri and
                  Yu Zhang and
                  Ahmet Erdem Sariy{\"{u}}ce and
                  Srikanta Tirthapura},
  title        = {{FLEET:} Butterfly Estimation from a Bipartite Graph Stream},
  booktitle    = {{CIKM}},
  pages        = {1201--1210},
  publisher    = {{ACM}},
  year         = {2019}
}

@article{vorotnikovaImproved3passAlgorithm2020,
  author       = {Sofya Vorotnikova},
  title        = {Improved 3-pass Algorithm for Counting 4-cycles in Arbitrary Order
                  Streaming},
  journal      = {CoRR},
  volume       = {abs/2007.13466},
  year         = {2020}
}

@inproceedings{bordinoMiningLargeNetworks2008,
  author       = {Ilaria Bordino and
                  Debora Donato and
                  Aristides Gionis and
                  Stefano Leonardi},
  title        = {Mining Large Networks with Subgraph Counting},
  booktitle    = {{ICDM}},
  pages        = {737--742},
  publisher    = {{IEEE} Computer Society},
  year         = {2008}
}

@inproceedings{assadiSimpleSublinearTimeAlgorithm2018,
  author       = {Sepehr Assadi and
                  Michael Kapralov and
                  Sanjeev Khanna},
  title        = {A Simple Sublinear-Time Algorithm for Counting Arbitrary Subgraphs
                  via Edge Sampling},
  booktitle    = {{ITCS}},
  series       = {LIPIcs},
  volume       = {124},
  pages        = {6:1--6:20},
  publisher    = {Schloss Dagstuhl - Leibniz-Zentrum f{\"{u}}r Informatik},
  year         = {2019}
}

@inproceedings{fichtenbergerApproximatelyCountingSubgraphs2022,
  author       = {Hendrik Fichtenberger and
                  Pan Peng},
  title        = {Approximately Counting Subgraphs in Data Streams},
  booktitle    = {{PODS}},
  pages        = {413--425},
  publisher    = {{ACM}},
  year         = {2022}
}

@inproceedings{beraTighterSpaceBounds2017,
  author       = {Suman K. Bera and
                  Amit Chakrabarti},
  title        = {Towards Tighter Space Bounds for Counting Triangles and Other Substructures
                  in Graph Streams},
  booktitle    = {{STACS}},
  series       = {LIPIcs},
  volume       = {66},
  pages        = {11:1--11:14},
  publisher    = {Schloss Dagstuhl - Leibniz-Zentrum f{\"{u}}r Informatik},
  year         = {2017}
}

@inproceedings{beraHowDegeneracyHelps2020,
  author       = {Suman K. Bera and
                  C. Seshadhri},
  title        = {How the Degeneracy Helps for Triangle Counting in Graph Streams},
  booktitle    = {{PODS}},
  pages        = {457--467},
  publisher    = {{ACM}},
  year         = {2020}
}

@article{sheshboloukiSGrappButterflyApproximation2022,
  author       = {Aida Sheshbolouki and
                  M. Tamer {\"{O}}zsu},
  title        = {sGrapp: Butterfly Approximation in Streaming Graphs},
  journal      = {{ACM} Trans. Knowl. Discov. Data},
  volume       = {16},
  number       = {4},
  pages        = {76:1--76:43},
  year         = {2022}
}

@article{lalUnderstandingMoneyTrails2021,
  author       = {Banwari Lal and
                  Rachit Agarwal and
                  Sandeep Kumar Shukla},
  title        = {Understanding Money Trails of Suspicious Activities in a cryptocurrency-based
                  Blockchain},
  journal      = {CoRR},
  volume       = {abs/2108.11818},
  year         = {2021}
}

@article{oskarsdottirSocialNetworkAnalytics2022,
  author       = {Mar{\'{\i}}a {\'{O}}skarsd{\'{o}}ttir and
                  Waqas Ahmed and
                  Katrien Antonio and
                  Bart Baesens and
                  R{\'{e}}mi Dendievel and
                  Tom Donas and
                  Tom Reynkens},
  title        = {Social network analytics for supervised fraud detection in insurance},
  journal      = {CoRR},
  volume       = {abs/2009.08313},
  year         = {2020}
}

@article{chibaArboricitySubgraphListing1985,
  author       = {Norishige Chiba and
                  Takao Nishizeki},
  title        = {Arboricity and Subgraph Listing Algorithms},
  journal      = {{SIAM} J. Comput.},
  volume       = {14},
  number       = {1},
  pages        = {210--223},
  year         = {1985}
}

@inproceedings{schelter2016tracking,
  author       = {Sebastian Schelter and
                  J{\'{e}}r{\^{o}}me Kunegis},
  title        = {Tracking the Trackers: {A} Large-Scale Analysis of Embedded Web Trackers},
  booktitle    = {{ICWSM}},
  pages        = {679--682},
  publisher    = {{AAAI} Press},
  year         = {2016}
}

@inproceedings{kunegisKONECTKoblenzNetwork2013,
  author       = {J{\'{e}}r{\^{o}}me Kunegis},
  title        = {{KONECT:} the {Koblenz} network collection},
  booktitle    = {{WWW} (Companion Volume)},
  pages        = {1343--1350},
  publisher    = {International World Wide Web Conferences Steering Committee / {ACM}},
  year         = {2013}
}

@inproceedings{jayaramOptimalAlgorithmTriangle2021,
  author       = {Rajesh Jayaram and
                  John Kallaugher},
  title        = {An Optimal Algorithm for Triangle Counting in the Stream},
  booktitle    = {{APPROX-RANDOM}},
  series       = {LIPIcs},
  volume       = {207},
  pages        = {11:1--11:11},
  publisher    = {Schloss Dagstuhl - Leibniz-Zentrum f{\"{u}}r Informatik},
  year         = {2021}
}

@article{carter1979universal,
  author       = {Larry Carter and
                  Mark N. Wegman},
  title        = {Universal Classes of Hash Functions},
  journal      = {J. Comput. Syst. Sci.},
  volume       = {18},
  number       = {2},
  pages        = {143--154},
  year         = {1979}
}

@article{ahmed2013NetworkSampling,
  author       = {Nesreen K. Ahmed and
                  Jennifer Neville and
                  Ramana Rao Kompella},
  title        = {Network Sampling: From Static to Streaming Graphs},
  journal      = {{ACM} Trans. Knowl. Discov. Data},
  volume       = {8},
  number       = {2},
  pages        = {7:1--7:56},
  year         = {2013}
}

@article{charbey2019stars,
  author       = {Rapha{\"{e}}l Charbey and
                  Christophe Prieur},
  title        = {Stars, holes, or paths across your Facebook friends: {A} graphlet-based
                  characterization of many networks},
  journal      = {Netw. Sci.},
  volume       = {7},
  number       = {4},
  pages        = {476--497},
  year         = {2019}
}

@inproceedings{rotabi2017detectingStrong,
  author       = {Rahmtin Rotabi and
                  Krishna Kamath and
                  Jon M. Kleinberg and
                  Aneesh Sharma},
  title        = {Detecting Strong Ties Using Network Motifs},
  booktitle    = {{WWW} (Companion Volume)},
  pages        = {983--992},
  publisher    = {{ACM}},
  year         = {2017}
}

@article{sporns2004motifs,
  title={Motifs in brain networks},
  author={Sporns, Olaf and K{\"o}tter, Rolf},
  journal={PLoS Biology},
  volume={2},
  number={11},
  pages={e369},
  year={2004},
  publisher={Public Library of Science San Francisco, USA}
}

@article{shin2018Patterns,
  author       = {Kijung Shin and
                  Tina Eliassi{-}Rad and
                  Christos Faloutsos},
  title        = {Patterns and anomalies in k-cores of real-world graphs with applications},
  journal      = {Knowl. Inf. Syst.},
  volume       = {54},
  number       = {3},
  pages        = {677--710},
  year         = {2018}
}

@inproceedings{kallaugherMPV19,
  author       = {John Kallaugher and
                  Andrew McGregor and
                  Eric Price and
                  Sofya Vorotnikova},
  title        = {The Complexity of Counting Cycles in the Adjacency List Streaming
                  Model},
  booktitle    = {{PODS}},
  pages        = {119--133},
  publisher    = {{ACM}},
  year         = {2019}
}

@article{lind2005cycles,
  title={Cycles and clustering in bipartite networks},
  author={Lind, Pedro G. and Gonzalez, Marta C. and Herrmann, Hans J.},
  journal={Phys. Rev. E Stat. Nonlin. Soft Matter Phys.},
  volume={72},
  number={5},
  pages={056127},
  numpages = {9},
  year={2005},
  publisher={APS}
}

@inproceedings{abuodaMA19LinkPrediction,
  author       = {Ghadeer Abuoda and
                  Gianmarco De Francisci Morales and
                  Ashraf Aboulnaga},
  title        = {Link Prediction via Higher-Order Motif Features},
  booktitle    = {{ECML/PKDD} {(1)}},
  series       = {Lecture Notes in Computer Science},
  volume       = {11906},
  pages        = {412--429},
  publisher    = {Springer},
  year         = {2019}
}

@article{eppstein2013listing,
  author       = {David Eppstein and
                  Maarten L{\"{o}}ffler and
                  Darren Strash},
  title        = {Listing All Maximal Cliques in Large Sparse Real-World Graphs},
  journal      = {{ACM} J. Exp. Algorithmics},
  volume       = {18},
  year         = {2013},
}

@article{alon1999space,
  author       = {Noga Alon and
                  Yossi Matias and
                  Mario Szegedy},
  title        = {The Space Complexity of Approximating the Frequency Moments},
  journal      = {J. Comput. Syst. Sci.},
  volume       = {58},
  number       = {1},
  pages        = {137--147},
  year         = {1999}
}

@article{shridhar2025network,
  author    = {Shridhar, Shivkumar Vishnempet and Lee, Selena T. and Charette, Yanick and Iosifidis, George and Christakis, Nicholas A.},
  title     = {Network-cycle motif participation is associated with individual and collective wealth in {Honduran} villages},
  journal   = {Scientific Reports},
  volume    = {15},
  number    = {1},
  pages     = {27680},
  year      = {2025}
}

@inproceedings{stefani2016triest,
  author       = {Lorenzo De Stefani and
                  Alessandro Epasto and
                  Matteo Riondato and
                  Eli Upfal},
  title        = {TRI{\`{E}}ST: Counting Local and Global Triangles in Fully-Dynamic
                  Streams with Fixed Memory Size},
  booktitle    = {{KDD}},
  pages        = {825--834},
  publisher    = {{ACM}},
  year         = {2016}
}

@inproceedings{lim2015mascot,
  author       = {Yongsub Lim and
                  U Kang},
  title        = {{MASCOT:} Memory-efficient and Accurate Sampling for Counting Local
                  Triangles in Graph Streams},
  booktitle    = {{KDD}},
  pages        = {685--694},
  publisher    = {{ACM}},
  year         = {2015}
}

@article{vitter85random,
  author       = {Jeffrey Scott Vitter},
  title        = {Random Sampling with a Reservoir},
  journal      = {{ACM} Trans. Math. Softw.},
  volume       = {11},
  number       = {1},
  pages        = {37--57},
  year         = {1985}
}

@inproceedings{braverman2013hard,
  author       = {Vladimir Braverman and
                  Rafail Ostrovsky and
                  Dan Vilenchik},
  title        = {How Hard Is Counting Triangles in the Streaming Model?},
  booktitle    = {{ICALP} {(1)}},
  series       = {Lecture Notes in Computer Science},
  volume       = {7965},
  pages        = {244--254},
  publisher    = {Springer},
  year         = {2013}
}

@inproceedings{nr,
      title = {The Network Data Repository with Interactive Graph Analytics and Visualization},
      author={Ryan A. Rossi and Nesreen K. Ahmed},
      booktitle = {AAAI},
      url={http://networkrepository.com},
      year={2015}
}

@inproceedings{luederssen2026near,
    author = {Sebastian L{\"{u}}derssen and Stefan Neumann and Pan Peng},
    title = {Near-Optimal Four-Cycle Counting in Graph Streams},
    booktitle    = {{SODA}},
    publisher    = {{SIAM}},
    year         = {2026},
    pages        = {4285--4326}
}

@misc{code,
    author = {Sebastian L{\"{u}}derssen and Stefan Neumann and Pan Peng},
  title = {Code for the paper ``{Four-Cycle Counting in Low-Degeneracy Graph Streams}''},
  year = {2026},
  publisher = {Zenodo},
  journal = {Zenodo repository},
  howpublished = {\url{https://doi.org/10.5281/zenodo.18089153}},
}

\appendix
\pagebreak

\section{Omitted content from \texorpdfstring{\cref{sec:algorithms}}{Section 3}}
\label{sec:appendix_algorithms}

We present missing content from \Cref{sec:algorithms}.

\cref{algo:nis} presents the pseudocode of the \NIS algorithm presented in \cref{sec:algorithms}. 

\begin{algorithm}[ht]
    \caption{\NISlong \NIS}
    \label{algo:nis}
    \DontPrintSemicolon
    \textbf{Input:} Graph $G$, sampling probability $p$\;
    Initialize an 8-wise-independent hash function $h:V\rightarrow[0,1]$\;
    $E_h=\emptyset$\;
    \textbf{Pass 1:} \ForEach{$e=(v,w)\in E$}{
        \If{$h(v)< p$ and $h(w)< p$}{
            Add $e$ to $E_h$\;
        }
    }
    \textbf{After Pass 1:}
    Let $X$ be the number of four-cycles in $G[V_h]$\;
    \Return{$\hat{T}=X/p^4$}
\end{algorithm}

\paragraph{\NIS with fixed size budget.}
We now give a detailed description of the fixed-size budget variant of \NIS.
Similarly to the fixed-size variant of \EIS, we adapt \NIS by dynamically deleting from the sample.
However, instead of using reservoir sampling, we manually vary the sampling probability $p$ in the algorithm.

We start by running \NIS with $p=1$. and we collect edges until $G[V_h]$ contains
$k$~edges.  Now whenever the limit of $k$ induced edges in~$G[V_h]$ is exceeded,
we decrease~$p$ and with it adjust the sampled node set
$V_h=\{v\in V \colon h(v)<p\}$ and induced edge set 
$E_h=\{(v,w)\in E \colon h(v)<p \land h(w)<p\}$. Note that decreasing~$p$
can never increase the size of $E_h$.
To find a lower value for $p$ such that $G[V_h]$ contains less than $k$~edges,
we set $p$~to the highest hash value~$h(v)$
of the nodes $v\in V_h$. We then remove all nodes~$v$ from $V_h$ with $h(v)\geq
p$ and also delete all of their incident edges from $G[V_h]$ as well.

To implement this procedure efficiently, each edge $e=(u,v)$ obtains a score
$\theta(e)=\max\{h(u),h(v)\}$. We maintain the scores in a binary search tree
and whenever we need to decrease~$p$, we set $p$ to the largest score and remove
all edges with score~$\theta(e)\geq p$. Note that we do not store the score of
all seen edges, but instead maintain a list of the at most $k$~edges currently
stored in $G[V_h]$.  Upon arrival of $e$ in the stream, we compute its score
$\theta(e)$, and insert it in the search tree if it is also added to $G[V_h]$.
Note that the size of the tree is always the same as that of $G[V_h]$, and thus never
contains more than $k$~elements.

Once the stream pass is finished, we again have to decide which value of~$p$ to use
for the rescaling of our estimate.  As before, we let $X$ denote the number of
four-cycles $X$ in $G[V_h]$. Now consider the value of~$p$ at the end of the
stream pass.  While defining $\hat{T}=X/p^4$ is the most natural choice, it does
not give the best results in practice. Instead, we use the plug-in estimate 
$$\hat{T}=\frac{m^2}{|E_h|^2}X.$$
This choice is motivated by the fixed-probability version, where
$\mathbb{E}[|E_h|]=mp^2$ and the unbiased scaling is $1/p^4$.
In the adaptive fixed-budget implementation, this plug-in scaling is heuristic,
but it gives better empirical performance than using the final threshold $p$
directly.

\section{Omitted content from \texorpdfstring{\cref{sec:theory}}{Section 4}}
\label{sec:appendix_proofs}

Next, we present missing content from \Cref{sec:theory}.

\subsection{Pseudocode of the theoretical version of \EIS}
\label{sec:pseudocode-EIS-theory}

We state the pseudocode of the theoretical version of \EIS in \Cref{algo:eis_oracle}.

\begin{algorithm}[ht]
    \caption{\EIS with heaviness oracle}
    \label{algo:eis_oracle}
    \DontPrintSemicolon
    \textbf{Pass 1:} 
        Sample edge sets~$S$ and~$R$ by sampling each edge twice u.a.r.\ with probability~$p=320\log(n)/(\epsilon^3\sqrt{T})$\;
    Let $V_S$ be the nodes incident to at least one edge in~$S$\;
    Let $V_R$ be the nodes incident to at least one edge in~$R$\;
    \textbf{Pass 2:} Collect edges in $G[V_S\cup V_R]$\;
    Compute $\operatorname{oracle}(e) \in \{\textsf{heavy}, \textsf{light}\}$ for all $e\in G[V_S]$\;
    Let $A_0,A_1=0$\;
    \ForEach{pair of oracle-light and vertex-disjoint edges $e,f\in S$}{
        Increase $A_0$ by the number of four-cycles induced by $e$ and $f$ without oracle-heavy edges\;
        Increase $A_1$ by the number of four-cycles induced by $e$ and $f$ with exactly one oracle-heavy edge\;
    }
    \Return{$A_0/(2p^2)+A_1/p^2$}
\end{algorithm}

\subsection{Proof of \texorpdfstring{\cref{lemma:oracle_accuracy}}{Lemma 2}}
    Consider an edge $e\in G[V_S]$. Let $(X_f)_{f\in E}$ be the random variables counting the number of four-cycles induced by $e$ and $f$, if $f\in R$, and $0$, otherwise.
    The oracle classifies an edge~$e$ by computing $q(e)=\sum_{f\in E} X_f$ and returning heavy or light based on \Cref{eq:oracle}.
    
Observe that the variables $(X_f)_{f\in E}$ are mutually independent, since they
depend on independent sampling decisions for the edges in $R$. Moreover,
$0\le X_f\le 2$, and 
\begin{align*}
        \IE[q(e)]=\sum_{\substack{f\in E \\ \text{$e,f$ induce a four-cycle}}} \EXP{X_f} = pt(e).
    \end{align*}

We apply a standard multiplicative concentration bound for independent bounded
random variables after scaling by a factor of $2$. 

If $t(e)>4\eta\sqrt{T}$, then
    \begin{equation*}
        \Prob{q(e)< \left(1-\frac{1}{2}\right)4p\eta\sqrt{T}}
		\leq \exp\left(-\frac{1}{6\cdot 4}4p\eta\sqrt{T}\right)
		\leq\frac{1}{n^3},
    \end{equation*}
    where we used $p=320\log(n)/(\eps^3\sqrt{T})$.
    
    If $t(e)<\eta\sqrt{T}$, then
    \begin{equation*}
        \Prob{q(e)>(1+1)p\eta\sqrt{T}}
		\leq \exp\left(-\frac{1}{6}p\eta\sqrt{T}\right)
	\leq\frac{1}{n^3}.
    \end{equation*}
    A union bound over all edges yields the statement.

\subsection{Proof of \texorpdfstring{\cref{lemma:eis_variance}}{Lemma 3}}
    Let $i\in\{0,1\}$. By Chebyshev's inequality, we obtain that
    \begin{equation*}
        \Prob{|\hat{T_i}-T_i|\geq \epsilon T}\leq \frac{\IV[\hat{T_i}]}{\epsilon^2T^2}.
    \end{equation*}

    Now we can bound the variance as follows:
    \begin{align}
        \Var{\hat{T_i}}&\leq \Var{\frac{A_i}{p^2}}=\frac{\IV[A_i]}{p^4} \leq \frac{\EXP{A_i^2}}{p^4}\nonumber \\
        &=\frac{1}{p^4}\left( \sum_{q\in D_i} \EXP{X_q^2} + \sum_{\substack{q, r\in D_i\\ |q\cap r|=1}}\EXP{X_q X_r} \right) \nonumber \\
        &\leq\frac{1}{p^4}\left( 4|D_i|p^2 + \sum_{q\in D_i} \sum_{\substack{r\in D_i\\ |q\cap r|=1}} 4p^3 \right) \nonumber \\
        &\leq \frac{1}{p^4}\left( 4|D_i|p^2 + |D_i| (2\cdot4\eta \sqrt{T})4p^3  \right) \label{eq:c4_1}\\
        &\leq \frac{4|D_i|}{p^2} + \frac{32\eta |D_i|\sqrt{T}}{p} \nonumber\\
        &\leq \frac{1}{10}\epsilon^2T^2 \label{eq:c4_2},
    \end{align}
    where we used that each pair of oracle-light edges can be contained in at most $2\cdot4\eta \sqrt{T}$ four-cycles by \Cref{lemma:oracle_accuracy}, which implies \eqref{eq:c4_1}. The final inequality~\eqref{eq:c4_2} and the statement follow from $|D_i|\leq 2T$ and setting $p=320\log(n)/(\epsilon^3\sqrt{T})$.

\subsection{Proof of \texorpdfstring{\cref{thm:NIS}}{Theorem 2}}
\label{sec:proof-NIS}

To prove the theorem, we set the sampling probability as  $$p=\frac{32}{(\epsilon^2T^{1/4})}.$$

To bound the space usage of our algorithm,
recall that storing the 8-wise-independent hash function can be done in $O(\log n)$ bits of space \cite{carter1979universal}.
Next, note that each edge is sampled with probability~$p^2$ as both of its endpoints are part of $V_h$ with probability $p$ independently. Here, we use the fact that our chosen hash function is 2-wise-independent. 
Thus \NIS stores $s = O(mp^2)=O(m/(\epsilon^4\sqrt{T}))$ edges in expectation. 
Thus, by Markov's inequality \NIS stores at most $8s=O(m/\eps^4 \sqrt{T})$ edges with probability at least~$\frac{7}{8}$.

Next, we show that our estimator is unbiased, i.e., that $\EXP{\hat{T}}=T$.
Recall that \NIS computes an estimate $\hat{T} = X/p^4$ on the number of four-cycles $T$ in $G$ and, as briefly discussed in \Cref{sec:algorithms},
any four-cycle is part of $G[V_h]$ if and only if all of its four vertices are sampled. For this, all four nodes of the four-cycle need to be part of $V_h=\{v\in V: h(v)\leq p\}$, which happens with probability $p^4$, as we are using a 4-wise independent hash function $h$.
Thus, any four-cycle is counted with probability $p^4$, and thus
\begin{align*}
    \EXP{\hat{T}}&= \frac{1}{p^4}\sum_{A\in\squares} p^4 = T.
\end{align*}

Next, we bound the variance of \NIS.
\begin{lemma}
The variance of the estimate $\hat{T}$ of \NIS is bounded by
\begin{align}
\label{eq:nis_variance}
    \Var{\hat{T}}\leq\frac{T}{p^4}+\frac{4T\Delta_V}{p}+\frac{4T\Delta_E}{p^2}+\frac{4T\Delta_W^2}{p^2}+\frac{4T\Delta_W}{p^3}.
\end{align}
\end{lemma}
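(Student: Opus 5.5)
We write $X=\sum_{A\in\squares} Y_A$, where $Y_A$ is the indicator that all four vertices of the four-cycle $A$ lie in $V_h$. Then $\hat{T}=X/p^4$ and
\begin{equation*}
\Var{\hat{T}}=\frac{1}{p^8}\sum_{A,B\in\squares}\left(\EXP{Y_AY_B}-\EXP{Y_A}\EXP{Y_B}\right).
\end{equation*}
The event $Y_AY_B=1$ says that all vertices of $A\cup B$ are sampled. There are at most $8$ such vertices, so the $8$-wise independence of $h$ gives $\EXP{Y_AY_B}=p^{|V(A)\cup V(B)|}$ exactly. If $A$ and $B$ are vertex-disjoint, this equals $p^8=\EXP{Y_A}\EXP{Y_B}$ and the covariance vanishes. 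For all other pairs we drop the negative term and bound the covariance by $p^{8-j}$, where $j=|V(A)\cap V(B)|\in\{1,2,3,4\}$.

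We therefore split the sum according to the overlap type of $(A,B)$ and count the pairs of each type via the heaviness parameters.

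\emph{Case $j=4$.} When $A=B$ we get $T$ pairs, which contribute $T\cdot p^4/p^8=T/p^4$. A four-vertex set carries at most $3$ four-cycles, so pairs $A\neq B$ on the same vertex set add at most $2T$ further pairs with the same weight. These must either be absorbed into the constant in front of the $\Delta_W/p^3$ term (note $\Delta_W\ge 1$ whenever $T>0$), or handled by observing that two distinct four-cycles on the same vertex set share a wedge.

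\emph{Case $j=1$.} Fix $A$. There are $4$ choices of the shared vertex, and each lies in at most $\Delta_V$ cycles, so there are at most $4T\Delta_V$ such pairs. Each has weight $p^7/p^8$, contributing $4T\Delta_V/p$.

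\emph{Case $j=2$.} The two shared vertices are either adjacent in $B$ (or in $A$), in which case they form a shared edge, or they are opposite. With weight $p^6/p^8=1/p^2$, the shared-edge pairs number at most $4T\Delta_E$, giving the $4T\Delta_E/p^2$ term. When the shared vertices are opposite in both cycles, the pair is a common "diagonal" $\{u,w\}$. If $c(u,w)$ denotes the number of common neighbours of $u$ and $w$, the cycles having $\{u,w\}$ as a diagonal number at most $\binom{c(u,w)}{2}$, and the wedges $u$--$x$--$w$ each lie in at most $\Delta_W$ cycles. So the number of cycles through the diagonal is at most $c\Delta_W/2$ and at least about $c$. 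Summing the square of this count over diagonals, and using that each cycle has $2$ diagonals, should give at most $4T\Delta_W^2$ pairs. This is the step I expect to be the main obstacle: we must relate "number of cycles through a diagonal" to $\Delta_W$, possibly bounding $B$ by choosing a wedge of $B$ through the diagonal and getting $\Delta_W$ options per such wedge, with $c\lesssim\Delta_W$ from the cycle counts.

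\emph{Case $j=3$.} Three shared vertices of a four-cycle always contain a path of length $2$ in $A$, i.e.\ a wedge of $A$. Given $A$ there are $4$ wedges, each lying in at most $\Delta_W$ cycles $B$, so there are at most $4T\Delta_W$ pairs. Each has weight $p^5/p^8$, contributing $4T\Delta_W/p^3$.

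Summing the five contributions gives \eqref{eq:nis_variance}.
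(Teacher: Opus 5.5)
Your decomposition is the paper's: split ordered pairs $(A,B)$ by $j=|V(A)\cap V(B)|$, give each pair weight $p^{8-j}/p^8$, and count pairs via $\Delta_V,\Delta_E,\Delta_W$. Writing $\Var{X}$ as a sum of covariances is the correct way to discard vertex-disjoint pairs. The paper's chain $\Var{X}\le\EXP{X^2}\le\sum_A\sum_{B:|B\cap A|>0}\EXP{X_AX_B}$ drops those pairs from $\EXP{X^2}$, where each contributes $p^8$.

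\textbf{The $j=4$ case.} The genuine gap is here, and it cannot be closed. Absorbing the $2T/p^4$ from distinct cycles on the same vertex set into $cT\Delta_W/p^3$ needs $p\Delta_W\ge 2/c$, which fails for small $p$. Two distinct four-cycles on the same four vertices share no wedge at all: they share exactly two disjoint edges, as in $K_4$. A shared wedge would not change the weight $p^4/p^8$ anyway.

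In fact the inequality is false as stated. For $G=K_4$ one has $T=3$, $\Delta_V=3$, $\Delta_E=2$, $\Delta_W=1$. Here $X$ equals $3$ if all four vertices are sampled and $0$ otherwise, so $\Var{\hat{T}}=9(1-p^4)/p^4$. The right-hand side is $3/p^4+12/p^3+36/p^2+36/p$. At $p=1/10$ these are about $9.0\cdot10^4$ versus $4.6\cdot10^4$. Disjoint unions of $K_4$'s give the same ratio for arbitrarily large $T$.

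Since four vertices span at most three four-cycles, the correct term is $3T/p^4$. The paper's proof has the same hole: its last display contains no $j=4$ contribution at all. The constant does not matter for \Cref{thm:NIS}.

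\textbf{The common-diagonal case.} You leave this case open. The missing fact is that a fixed wedge $u$--$x$--$w$ lies in exactly $c-1$ four-cycles, where $c=c(u,w)$. Hence $c\le\Delta_W+1$, and the number of four-cycles with diagonal $\{u,w\}$ is $c(c-1)/2\le\Delta_W^2$. Fixing $A$ and one of its two diagonals leaves at most $\Delta_W^2$ choices of $B$, i.e.\ at most $2T\Delta_W^2$ pairs.

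\textbf{Chord configurations.} Outside bipartite graphs, your $j=2$ and $j=3$ counts miss these.
\begin{enumerate}
\item In the $j=2$ case, the shared pair $\{u,w\}$ may be an edge of $B$ but a diagonal of $A$, or vice versa. The rule ``$B$ contains one of the four edges of $A$'' does not cover this. Bounding these pairs via the two diagonals of each cycle adds up to $4T\Delta_E$ pairs.
\item In the $j=3$ case, the three shared vertices form a wedge in each cycle, but possibly with different centers. For example, take $A=a$--$b$--$c$--$d$--$a$ and $B=b$--$a$--$c$--$y$--$b$ when $ac\in E$. Then $B$ contains no wedge of $A$. The correct count is $4$ triples times $3$ centers, i.e.\ $12T\Delta_W$.
\end{enumerate}

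In bipartite graphs none of these configurations exist. Four vertices span at most one four-cycle, and a shared pair or triple occupies the same positions in both cycles. So there your argument, with the diagonal step filled in, proves the bound as stated. In general it yields
\[
\frac{3T}{p^4}+\frac{4T\Delta_V}{p}+\frac{8T\Delta_E+2T\Delta_W^2}{p^2}+\frac{12T\Delta_W}{p^3},
\]
which still suffices for \Cref{thm:NIS}.
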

\begin{proof}
    We analyze the variance by considering pairs of intersecting four-cycles.
	Define the binary random variable $X_A$ for any four-cycle $A$ to be~1, 
	if all nodes of $A$ were sampled in $V_h$, and 0, otherwise. Then $X=\sum_{A\in\squares} X_A$. For simplicity, we write $B\cap A:=V(A)\cap V(B)$ to denote the shared nodes of two four-cycles $A,B$.
    Now, as we are using an 8-wise-independent hash function to determine our node sample, any pair of variables $X_A$ and $X_B$ where the four-cycles $A$ and $B$ do not share vertices is independent. Using this property, we can bound the variance as
    \begin{align*}
        \Var{\hat{T}}&\leq \frac{1}{p^8} \Var{X} \leq \frac{1}{p^8} \EXP{X^2}\\
        &\leq \frac{1}{p^8}\left(
				\sum_{A\in\squares}\sum_{\substack{B\in\squares\\|B \cap A|>0}} \EXP{X_AX_B} \right)\\
        &\leq \frac{1}{p^8}\left(
				\sum_{A\in\squares}\sum_{i=1}^4\sum_{\substack{B\in\squares\\|B \cap A|=i}} \EXP{X_AX_B} \right)\\
        &\leq \frac{1}{p^8}\left( Tp^4 \left(4\Delta_V p^3 + 4(\Delta_E+\Delta_W^2) p^2 + \Delta_W p\right) \right).
        \qedhere
    \end{align*}
\end{proof}

\begin{proof}[Proof of \Cref{thm:NIS}]
    Using Chebyshev's inequality, we get
    $$\Prob{|\hat{T}-T|\geq \epsilon T}\leq \frac{\Var{\hat{T}}}{\epsilon^2T^2}.$$ 
    Thus, we obtain a $(1+\varepsilon)$-approximation with probability $3/4$ if $\Var{\hat{T}}\leq \epsilon^2T^2/4$.
    The latter follows if 
    the input graph $G$ fulfills $\Delta_V\leq T^{3/4}$, $\Delta_E\leq \sqrt{T}$, and $\Delta_W\leq T^{1/4}$ due to \Cref{eq:nis_variance} and  by setting $p=32/(\epsilon^2 T^{1/4})$.
    Taking a union bound with the error probability for the space usage, we obtain the theorem.
\end{proof}

\section{Further Experiment Details}
\label{sec:appendix_experiments}
We provide further details about our experiments.

\subsection{Dataset Description}
\label{sec:datasets}

In our experiments, we use the two bipartite networks \orkut and \livejournalgroups, which model group membership of users in social networks. Nodes are users and groups, resp., and edges link users to their groups. These networks behave fundamentally differently from friendship networks, as single groups contain more than one million users. Thus, two similar groups with a large intersection form many heavy wedges. On \livejournalgroups there are just two groups, which form extremely heavy wedges and already generate $6\%$ of the four-cycles in the network.

Next, the bipartite datasets with the highest average degrees (each of them with $\bar{d}\geq 100$) are \movielens, \reuters and \pubmed. The \movielens dataset is a well-known bipartite network modeling movie ratings and is particularly dense ($\bar{d}\geq 240$  and $\kappa=476$); additionally, it has a large number of four-cycles, given its 10~million edges.
\reuters and \pubmed model word-text-inclusion.
All three of these networks have large values of $\Delta_V$ and $\Delta_W$, suggesting that \NIS should perform badly on them, but their $\Delta_E$-values are still moderate, suggesting that \EIS and \EISm should perform well.

\begin{figure*}[t]
  \centering
  \includegraphics[width=\textwidth]{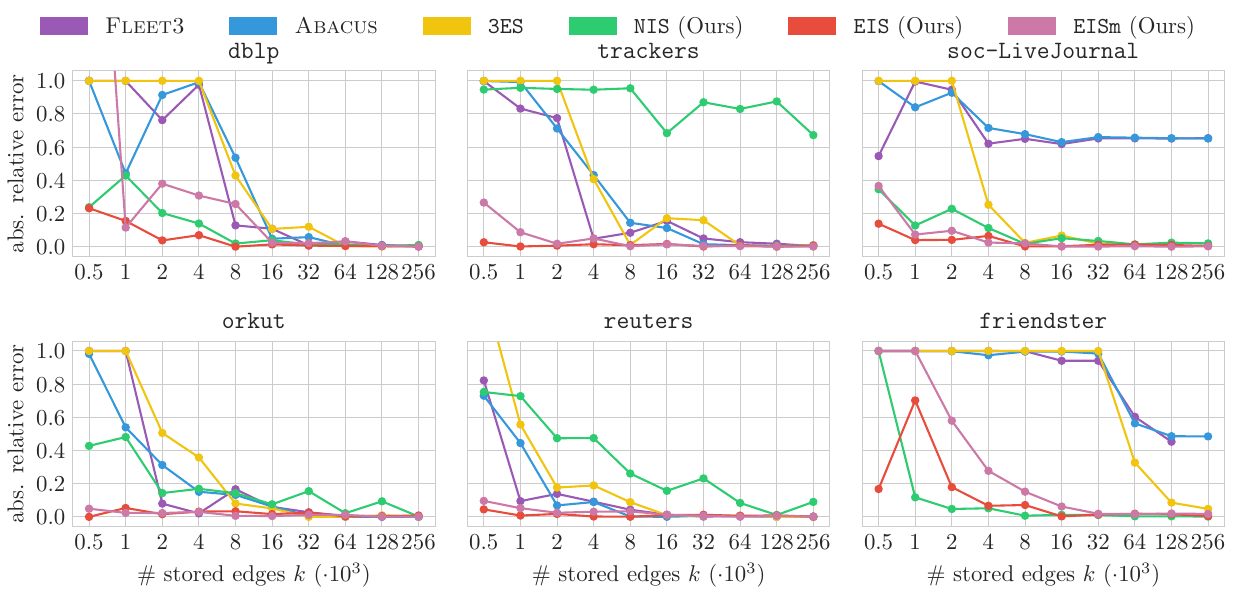}
  \caption{Relative error of the algorithms as a function of \storedEdges on selected instances (averaged over 100 runs). \EIS obtains the best results and consistently achieves very low errors, even when only storing 2k~edges or less.}
  \label{fig:algocompare_mean}
\end{figure*}

\begin{figure*}[t]
  \centering
  \includegraphics[width=\textwidth]{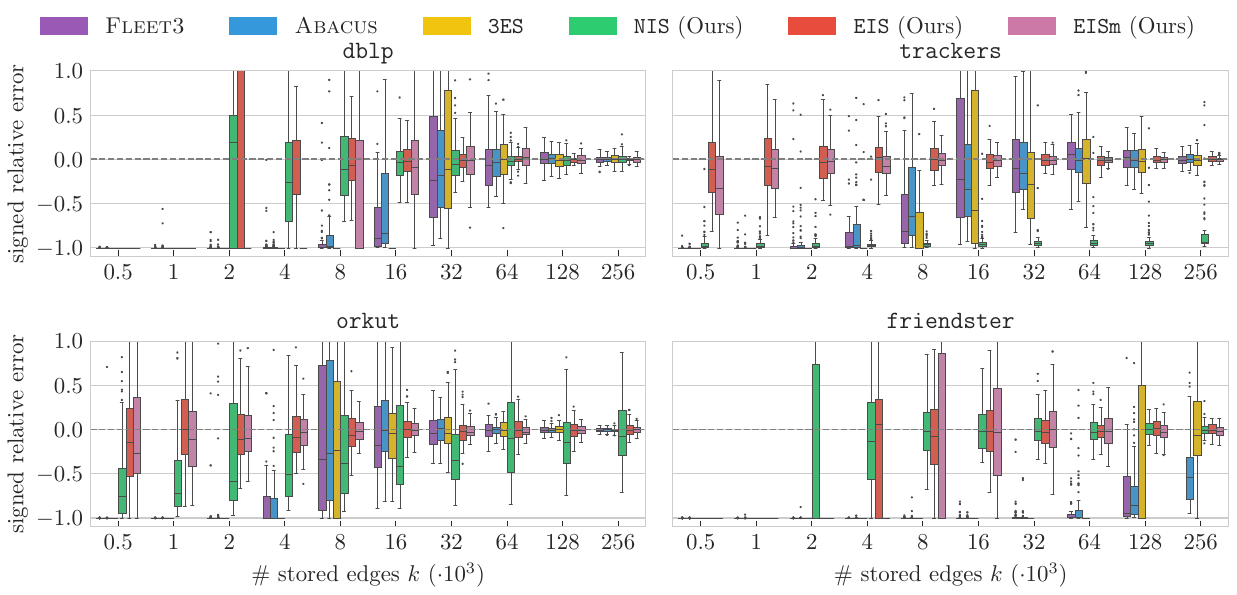}
  \caption{Variance of the relative error of the algorithms on selected instances. Based on 100 estimates, the box spans from the 25th to the 75th percentile, with the line inside indicating the median. Dots represent outliers, and vertical lines show the remaining values in the first and fourth quartiles.}
  \label{fig:algocompare_variance}
\end{figure*}

Additionally, we consider two very sparse datasets ($\bar{d} \leq 10$), \dblp and \trackers. Even though both have low average degree, they are quite different. The \dblp authorship network contains only very few four-cycles (roughly $T\approx 30M$), and its four-cycles are well-distributed (as evidenced by small values for $\Delta_V$, $\Delta_E$ and $\Delta_W$).
In contrast, the \trackers dataset links website and website-trackers of a 2012 web crawl, and, while having a low average degree, the four-cycle distribution is highly skewed towards a few nodes. Indeed, 25\% of all websites in the dataset contain a \texttt{googleapis.com} tracker and the corresponding vertex is contained in more than half of all four-cycles in the instance \cite{schelter2016tracking}.  This leads to very high values of $\Delta_V$ and $\Delta_W$ (both are $\geq 1000$, by far the largest values across all of our datasets), but nonetheless $\Delta_E$ is small.

We also consider three unipartite social networks \flickr, \socLiveJournal, and \friendster, which model friendship or follower connections between users. Naturally, such networks have low density and even distributions of four-cycles, which is confirmed by the low values for the heaviness values $\Delta_V$, $\Delta_E$, $\Delta_W$. The \friendster network is our largest instance with 1.8~billion unique edges.

Finally, we consider three datasets with high degeneracy. The genetic networks \biohuman and \bnhuman and the collaboration network \hollywood have a degeneracy up to $\kappa=2208$, while all previously discussed datasets had a degeneracy of at most 600.

\subsection{Further average error analysis}
\label{sec:appendix_experiments_mean}

In \Cref{fig:algocompare_mean} we present the absolute relative error as a function of the \storedEdges~$k$ on six selected instances, including two of the unipartite networks (on the right). 
\EIS is consistently the best method for a given space usage, with \EISm the runner-up. Typically, the estimates of \EIS and \EISm converge much more quickly than for the competing methods, and outperform the two-pass baseline \multipassbaseline. 
As suggested by our theoretical analysis, the performance of \NIS depends heavily on the $\Delta_V$ and $\Delta_W$ values of the datasets.

\subsection{Further variance analysis}
\label{sec:appendix_experiments_variance}

In \Cref{fig:algocompare_variance} we report the variance of all algorithms for increasing \storedEdges for a representative subset of instances. 
We observe that on all instances (except the very sparse \dblp, see below), \EISm has a smaller variance than \EIS, as also suggested by \Cref{tab:variance_table} and by how we designed \EISm. This is enabled by the fact that \EIS obtains excellent mean errors even for very little space usage, and thus running \EIS with little space multiple times (as done in \EISm) will yield estimates of lower variance.

Furthermore, we observe that for very small sample sizes, all algorithms exhibit a large variance in performance. 
While on \dblp all algorithms converge to low variance estimators,  on \orkut the variance of \NIS remains high. 
Further, on \orkut \Abacus and \Fleet give accurate estimates more consistently than \EIS for large sample sizes, while \EIS performs better for a sample size of 16k.
We also observe that on \trackers, \NIS not only has high variance, but also fails to sample any four-cycle even for high sample rates, which matches the observation that most of the four-cycles are concentrated on a few vertices.
On \friendster, \Fleet and \Abacus do not find any four-cycles for sample sizes up to 64k, while both \NIS and \EIS can give adequate estimates.

A final important observation is that on the sparse \dblp network the regular \EIS clearly outperforms the variance-reduced \EISm. 
We explore an explanation for this phenomenon next.

\subsection{\EIS outperforms \EISm on very sparse graphs}
\label{sec:appendix_experiments_inducedFraction}

\begin{figure}[t]
    \centering
    \includegraphics[width=0.9\linewidth]{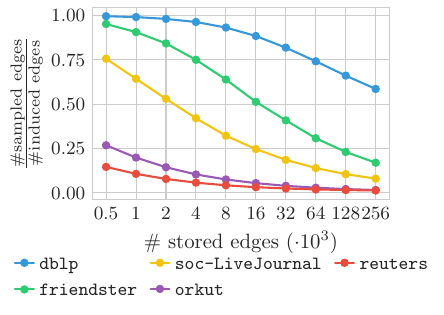}
    \caption{Fraction~$|S|/|G[V_S]|$ of sampled edges over induced edges when running \EIS on selected instances. For large graph sample sizes, \EIS spends a lot more space on storing $G[V_S]$ compared to $S$, making the estimator less powerful. }
    \label{fig:density}
\end{figure}

We now give an explanation for the following two observations from \Cref{tab:variance_table} and \Cref{fig:algocompare_variance}:
\begin{enumerate}
    \item \EIS outperforms \EISm on the \dblp network.
    \item On some instances \EIS exhibits a larger variance than the baselines, even for large sample sizes.
\end{enumerate}
In \Cref{fig:density} we plot the average ratio of sampled edges~$S$ over induced edges in $G[V_S]$ when running \EIS on different instances, i.e., we report $\frac{|S|}{|G[V_S]|}$. Note that as each sampled edge is also an induced edge, the ratio is at most 1. 
Observe that this ratio is meaningful since we only count four-cycles that are induced by two edges from~$S$. Thus, if the ratio is small, we expect that \EIS is ``wasting'' a lot of space on storing edges in $G[V_S]$ that do not contribute to obtaining better estimates.

In the figure, we see that the ratio $\frac{|S|}{|G[V_S]|}$ decreases monotonically on all instances, but behaves very differently across instances.
On \dblp more than 50\% of the edges in the induced subgraph are also in~$S$, even for $256.000$ stored edges, and for small \storedEdges a ratio of nearly 1 indicates that no induced edges are collected at all due to the low density of the network.
On \reuters, on the other hand, less than $1.5\%$ of subgraph edges are edges from~$S$ for $k=256.000$ and thus the vast majority of available space is used for storing induced edges.

We attribute the slow convergence of the variance of \EIS with increasing \storedEdges to the fact that the proportion of sampled edges in the graph sample decreases considerably with each increase in \storedEdges, implying diminishing returns in estimator accuracy.
This explains the effectiveness of \EISm on most networks and why it is outperformed by \EIS on \dblp.

\subsection{Sensitivity analysis of parameter \texorpdfstring{$s$}{s}}
\label{sec:appendix_experiments_sensitivity-s}

\begin{figure*}[t]
    \centering
    \includegraphics[width=\linewidth]{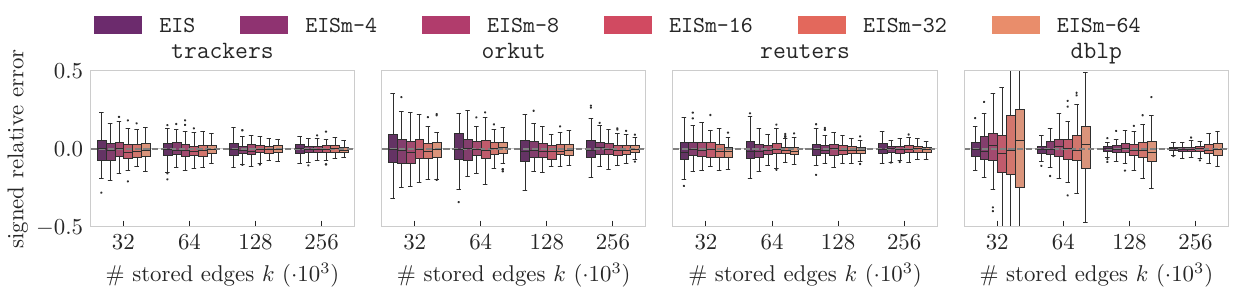}
    \caption{Sensitivity analysis of \EISm on the parameter $s$. Increasing $s$ reduces the variance on most instances, as expected. On the very sparse \dblp network increasing $s$ yields worse results, as explained in \cref{sec:appendix_experiments_inducedFraction}.}
    \label{fig:varying_s}
\end{figure*}

Recall that the \EISm algorithm divides the available space budget $k$ into $s$ parts and averages the estimates of $s$ parallel runs of $\EIS$, each using space $k/s$. 
In the experiments in \cref{sec:experiments}, we set $s=32$ when comparing \EISm to the other algorithms. We now justify this choice of $s$ by performing a sensitivity analysis of \EISm on the parameter $s$. 

We analyze the variance of \EISm for different values of $s$. 
We run \EISm with parameter $s\in\{4,8,16,32,64\}$ and compare the variance of the resulting estimates with \EIS.
\Cref{fig:varying_s} shows the results for a selection of instances. We plot the variance of all estimators for values $k\geq 32.000$.

On \trackers, \reuters and \orkut, the variance decreases with increasing $s$. Thus, as expected, \EISm gets more accurate for larger values of $s$, at the expense of an increase in runtime linear in $s$.
However, the gain in accuracy is diminishing as $s$ grows, leading to only slightly lower variance for $s=64$ compared to $s=32$.

Interestingly, on the \dblp network, increasing $s$ degrades the performance of the \EISm estimator. This observation matches our analysis from \cref{sec:appendix_experiments_inducedFraction}: \dblp is so sparse that even for large $k$, the number of induced edges collected by \EIS is small. Thus, using \EISm and subdividing the available size budget results in even fewer induced edges and less knowledge of the graph structure. This explains the increased variance when increasing $s$ on \dblp.

Overall, we conclude that as the runtime of \EISm scales linearly in $s$, choosing $s=32$ is a reasonable trade-off between runtime and variance improvements.

\subsection{Runtime Breakdown}

In \Cref{fig:runtime_breakdown}, we provide a runtime breakdown of our implementation of \EIS. 

\begin{figure*}[t]
    \centering
    \includegraphics[width=0.7\linewidth]{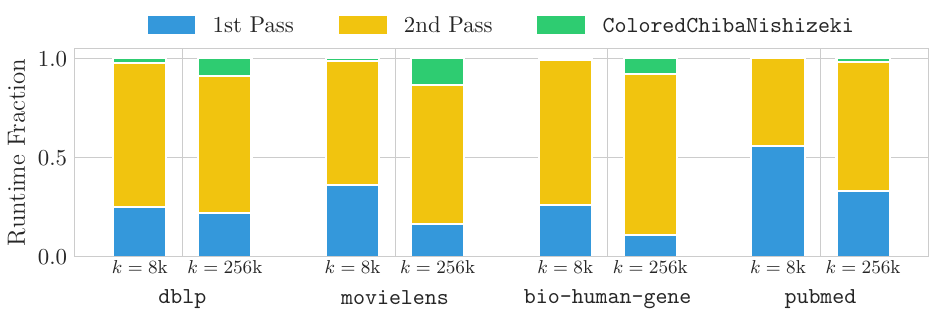}
    \caption{Runtime Breakdown for \EIS on selected instances for two values of \storedEdges.
    For each network, we show the relative time spent during the first pass, the second pass and the postprocessing using \texttt{ColoredChibaNishizeki}.}
    \label{fig:runtime_breakdown}
\end{figure*}

Across all instances, most time is spent during the second stream pass. The time spent for counting four-cycles in the sample using \texttt{ColoredChibaNishizeki} is negligible for small $k$ and remains low even for $k=256$k.
Additionally, we observe that as $k$~increases, the first pass takes a smaller fraction of the overall runtime. This is caused by an increased number of hashmap queries in the second pass and increased running time for \texttt{ColoredChibaNishizeki} due to the larger sample size.
On the \pubmed dataset, which is the largest dataset considered in the figure (containing 483M edges), the first pass requires a larger fraction of the running time since the first pass requires a random sample for each edge in the input; in contrast, the hashing overhead in the second pass only scales with the size of the sample, which is fixed, and therefore does not increase with the number of edges.

\end{document}